\newcolumntype{R}{>{\raggedleft\arraybackslash}X}  \newcolumntype{L}{>{\raggedright\arraybackslash}X}  \newcolumntype{C}{>{\centering\arraybackslash}X}
\crefname{algorithm}{Alg.}{Algs.}
\Crefname{algorithm}{Algorithm}{Algorithms}
\crefname{figure}{Fig.}{Figs.}
\Crefname{figure}{Figure}{Figures}
\crefname{section}{Sec.}{Secs.}
\Crefname{section}{Section}{Sections}
\crefname{table}{Tab.}{Tables}
\Crefname{table}{Table}{Tables}
\crefname{theorem}{Thm.}{Thms.}
\Crefname{theorem}{Theorem}{Theorems}
\newcommand*{\setE}{\ensuremath{\mathcal{E}}}
\newcommand*{\Gammah}{\Gamma_h}
  \newcommand{\IP}{\ensuremath\mathds{P}}
  \newcommand{\IR}{\ensuremath\mathds{R}}
  \renewcommand*{\vec}[1]{{\boldsymbol{#1}}}                     % Vector.
\DeclareMathAlphabet{\mathbfsf}{\encodingdefault}{\sfdefault}{bx}{n}
  \newcommand*{\vecc}[1]{\mathbfsf{#1}}                          % Tensor or matrix.
\newcommand*{\laplace}{\upDelta}                                 % Laplace operator (this is NOT the same as \Delta which is used for a difference). 
\newcommand*{\grad}{\vec{\nabla}}                                % Gradient.
\newcommand*{\curl}{\vec{\nabla}\times}                          % Curl.
\renewcommand*{\div}{\vec{\nabla}\cdot}                          % Divergence.
\newcommand*{\strain}{{\boldsymbol{\varepsilon}}}                % Deformation tensor in NS equation.
\newcommand*{\llbrace}{\lbrace\hspace*{-0.18em}\vert}
\newcommand*{\rrbrace}{\vert\hspace*{-0.18em}\rbrace}
\newcommand*{\avg}[1]{\llbrace{#1}\rrbrace}                      % Arithmetic average.
\newcommand*{\jump}[1]{\left\llbracket{#1}\right\rrbracket}      % Jump of a discontinuous value.
\newcommand*{\dd}{\mathrm{d}}                                    % Differential d for ``int dx''.
\newcommand*{\abs}[1]{\ensuremath{|#1|}}                         % Absolute value or volume.
\newcommand*{\Rey}{\mathrm{Re}}                                  % Reynods number.
\newcommand*{\Nel}{N_\mathrm{el}}                                % Number of elements.
\newcommand*{\norm}[2]{\|#1\|_{#2}}                              % Norm.
\newcommand*{\normal}{\vec{n}}                                   % Normal, we cannot use \vec{\nu} here as it is almost the same symbol as \vec{v}.
\newcommand*{\transpose}[1]{{#1}^\mathrm{T}}                     % Transposition.
\newcommand*{\on}[2]{\left.#1\right\vert_{#2}}                   % restriction on a set, ie eg u|G
\newcommand{\upwind}[1]{#1^{\uparrow}}                           % Upwind symbol.
\newtheorem{thm}{Theorem}[section]
\newtheorem{proposition}[thm]{Proposition}
\newtheorem{remark}{Remark}[section]
\begin{document}
%\linenumbers
\begin{frontmatter}

\title{An interior penalty discontinuous Galerkin approach for 3D incompressible Navier--Stokes equation for permeability estimation of porous media}
\author[label1]{Chen Liu}
\author[label2]{Florian Frank}
\author[label1,label3]{Faruk O.~Alpak}
\author[label1]{B{\'e}atrice Rivi{\`e}re\corref{cor1}}\ead{riviere@rice.edu}
\address[label1]{Rice University, Department of Computational and Applied Mathematics, 6100 Main Street, Houston, TX 77005, USA}
\address[label2]{Friedrich-Alexander-Universit\"at Erlangen-N\"urnberg, Department Mathematik, Cauerstra{\ss}e~11, 91058~Erlangen, Germany}
\address[label3]{Shell Technology Center, 3333 Highway 6 South, Houston, TX 77082, USA}

\cortext[cor1]{Corresponding author: B{\'e}atrice Rivi{\`e}re}

\begin{abstract}
Permeability estimation of porous media from direct solving Navier--Stokes equation has a wide spectrum of applications in petroleum industry. In this paper, we utilize a pressure-correction projection algorithm in conjunction with the interior penalty discontinuous Galerkin scheme for space discretization to build an incompressible Navier--Stokes simulator and to use this simulator to calculate permeability of real rock sample. The proposed method is accurate, numerically robust, and exhibits the potential for tackling realistic problems.
\end{abstract}

\begin{keyword}
%% keywords here, in the form: keyword \sep keyword
incompressible Navier--Stokes equation \sep interior penalty discontinuous Galerkin method \sep projection method \sep porous media
%% PACS codes here, in the form: \PACS code \sep code

%% MSC codes here, in the form: \MSC code \sep code
%% or \MSC[2008] code \sep code (2000 is the default)
\end{keyword}
\end{frontmatter}

%\renewcommand{\contentsname}{\red Temporary, for structuring the paper}
%\tableofcontents

%%%%%%%%%%%%%%%%%%%%%%%%%%%%%%%%%%%%%%%%%%%%%%%%%%%%%%%%%%%%%%%%%%%%%%%%%%%%%%%%%%%%%%%%%%%%%%%%%%%%%%%%%%%%%%%%%%%%%%%
\section{Introduction}
%%%%%%%%%%%%%%%%%%%%%%%%%%%%%%%%%%%%%%%%%%%%%%%%%%%%%%%%%%%%%%%%%%%%%%%%%%%%%%%%%%%%%%%%%%%%%%%%%%%%%%%%%%%%%%%%%%%%%%%
In the past decades, digital rock physics (DRP) has undergone rapid development. The modern X-ray microtomography (micro-CT) creates cross-sections of small rock samples on micrometer to millimeter scale by means of X-rays, which subsequently can be used to construct high-resolution model domains by 3D imaging software. This technique provides geometry patterns of rock structure, cf.~\Cref{fig:porousDomain256}. With the increasing availability of computational resources and developments in numerical algorithms for 3D image reconstruction of pore structures, the use of direct numerical simulation for computing effective properties  of a porous medium is playing a key role in the understanding of flows and rock interactions and is complementing time-consuming high-cost laboratory measurements. In addition, compared with traditional empirical formula based estimations, direct numerical simulation methods provide more accurate results, which attract much attention in hydro-geophysics and petroleum engineering fields.
\begin{figure}[ht!]
\centering
\includegraphics[width=0.6\linewidth]{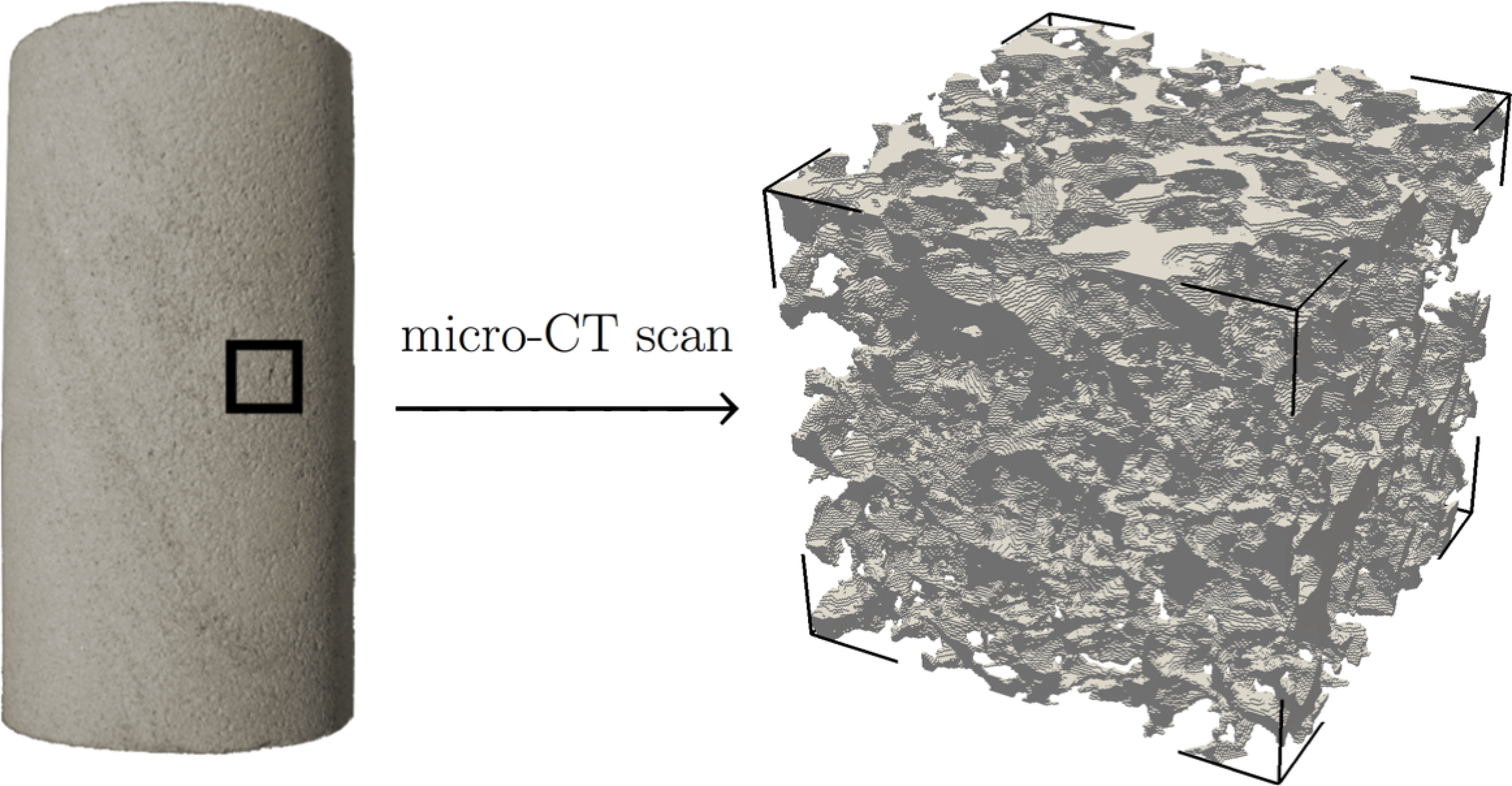}
\caption{A~$256\times256\times256$ voxel\index{voxel} sets describing the fluid space of porous matrices.}
\label{fig:porousDomain256}
\end{figure}
\par 
Estimating permeabilities of rock matrices by direct numerical simulation usually requires large-scale computing to solve incompressible Navier--Stokes equation on complicated computational domains. Therefore an efficient numerical scheme becomes significantly important. The classical approach of solving the saddle-point problem by solving simultaneously the momentum and continuity equations has been well studied \cite{GirRav86,temam2001navier}. However, this approach is not amenable for large systems. After taking both numerical accuracy and computational efficiency into account and balancing all considerations, one preferred technique is to decouple the nonlinear reaction term from the pressure term by using projection algorithms \cite{chorin1968numerical}. The splitting preserves important properties such as incompressibility or consistency with the boundary data. The literature on projection algorithms coupled with continuous finite element methods is vast. We refer the reader to an overview of the splitting techniques in \cite{guermond2006overview,Blasco2004}.
\par
Interior penalty discontinuous Galerkin (IPDG) methods form a popular class of accurate and robust numerical schemes for solving partial differential equations \cite{riviere2008,WarburtonBook}, which are known to be very flexible and have several positive features: local construction of trial and test spaces; curved boundaries and inhomogeneous boundary conditions are naturally handled; a local mass conservation property and suitability for parallelization in the context of large-scale simulations in complex domains. Due to these advantages, IPDG methods are widely used in computational fluid dynamics (CFD), geomechanics, and porous medium simulations. Although, a lot of research has been done for developing stable and high-order accurate solvers for CFD, there are very few works in the literature on the application of IPDG methods to incompressible flows for large scale computing. The use of discontinuous elements is studied in a pressure-correction algorithm in \cite{piatkowski2018stable} and in a velocity-correction projection algorithm in \cite{krank2017high}.  The paper \cite{piatkowski2018stable} employs a modified upwind scheme for the reaction term and a discretization of the projection step that are different from the ones we propose in this paper.
\par
The paper's contribution to the field of pore scale flows is a numerical approach for computing permeabilities of porous media. The porous structure is obtained from micro-CT imaging of the rock sample. IPDG methods combined with a projection algorithm are used to discretized the incompressible Navier--Stokes equations. The nonlinear system is linearized by a Picard splitting and a div--div correction postprocessing technique ensures a solenoidal velocity field. From the computed velocity and pressure fields, averages pressure and flux on inflow and outflow faces are obtained. The proposed approach is validated on benchmark problems. Numerical experiments on real rock images show the robustness of this method for handling 3D large scale simulations on complex geometries.
\par 
Estimating effective properties of rocks, such as absolute permeability, is one important objective of DRP \cite{Dvorkin2011}. There are few papers on the estimation of permeabilities from real rocks. For instance, the work \cite{Andra2013} uses lattice Boltzmann method (LBM) and finite difference method to solve the Stokes equations and to obtain permeabilities of rock samples. More recently, a comprehensive numerical study of various rock samples is provided in \cite{saxena2017references} where permeability values are estimated by various solvers including LBM, continuous finite element, finite volume and fast Fourier transforms.  Comparisons with experimental values show in some cases significant discrepancies. In addition, digital values of permeabilities vary from solver to solver. While the study does not fully address the reasons behind the mismatch, it highlights the need for accurate flow solvers.  To our knowledge, this paper is the first to propose a DG-based solver for computing effective rock properties.
\par
The outline of this paper follows. The incompressible Navier--Stokes model including boundary conditions for inflow, outflow, and fluid--solid interface are described in~\Cref{sec:ModelProblem}. The numerical algorithm is provided in~\Cref{sec:NS:numerical_scheme}. The validation tests and the numerical simulations for permeability estimation are given in~\Cref{sec:NS:numerical_simulation}.

%%%%%%%%%%%%%%%%%%%%%%%%%%%%%%%%%%%%%%%%%%%%%%%%%%%%%%%%%%%%%%%%%%%%%%%%%%%%%%%%%%%%%%%%%%%%%%%%%%%%%%%%%%%%%%%%%%%%%%%
\section{Model Problem}\label{sec:ModelProblem}
%%%%%%%%%%%%%%%%%%%%%%%%%%%%%%%%%%%%%%%%%%%%%%%%%%%%%%%%%%%%%%%%%%%%%%%%%%%%%%%%%%%%%%%%%%%%%%%%%%%%%%%%%%%%%%%%%%%%%%%
Complex fluid phenomena in nature and in industrial applications give rise to a large number of challenging mathematical problems. The Navier--Stokes equations are the classical mathematical equations describing viscous gas and liquid flow. In this section, we introduce the equations for incompressible fluids, discuss the boundary conditions, and nondimensionalize the system equations.

\subsection{Governing equations}
Let $\Omega \subset \IR^3$ denote an open bounded polyhedral domain and let $\normal$ denote the unit outer normal of $\Omega$. Consider an incompressible fluid occupying the spatial domain $\Omega$ over the time interval $(0,T)$, where the unknown variables velocity $\vec{v}$ and pressure $p$ satisfy the Navier--Stokes equations:
\begin{subequations}\label{eq:GoverningEquationsNS}
\begin{align}
\rho_0\,(\partial_t{\vec{v}} + \vec{v}\cdot\grad{\vec{v}}) - \mu_\mathrm{s}\laplace{\vec{v}}
~&=~ -\grad{p} &&\text{in} ~(0,\,T)\times\Omega~, \label{eq:NavierStokesEquation} \\
\div\vec{v} ~&=~ 0 && \text{in} ~(0,\,T)\times\Omega~. \label{eq:IncompressibilityConstraint}
\end{align}
\end{subequations}
Here, we assume a positive constant fluid density $\rho_0$ and a positive constant shear viscosity $\mu_\mathrm{s}$. The diffusion term~$-\laplace{\vec{v}}$ and the convection term~$\vec{v}\cdot\grad{\vec{v}}$ in~\cref{eq:NavierStokesEquation} are simplifications of the more general forms $-2\div{\strain(\vec{v})}$ and $\div{(\vec{v}\otimes\vec{v})}$\,, respectively, where
\begin{equation*}
\strain(\vec{v}) = \frac{1}{2} \big(\grad \vec{v} + \transpose{(\grad\vec{v})}\big)
\end{equation*}
is the deformation tensor. The incompressibility constraint is expressed by \cref{eq:IncompressibilityConstraint}, which is also called the pressure equation. With constraint \cref{eq:IncompressibilityConstraint}, the equivalence of the diffusion and convection terms directly follows from the identities
\begin{align*}
2\div{\strain(\vec{v})} ~&=~ \grad{(\div{\vec{v}})} + \laplace{\vec{v}}~,\\
\div(\vec{v}\otimes\vec{v}) ~&=~ \vec{v}\cdot\grad{\vec{v}} + (\div{\vec{v}})\vec{v}~.
\end{align*}
Note since these operators are mathematically equivalent in the model, one might consider formulating problems more generally. Even though these terms are identical, they lead to different boundary conditions. It turns out that using the more general expressions leads to unexpected behaviors of the velocity field at the outflow boundary in open boundary simulations, cf.~\Cref{sec:Boundary_conditions}.

\subsection{Initial and boundary conditions}\label{sec:Boundary_conditions}
Boundary conditions play an important role in modeling various physical phenomena. In this section, we will consider two different scenarios, i.\,e., a closed boundary model and an open boundary model. For the former case, fluid flow across the boundary of the domain is suppressed. In both cases, \cref{eq:GoverningEquationsNS} is supplemented by the initial condition:
\begin{align}\label{eq:NS:IC}
\vec{v} ~=~ \vec{v}^0 && \text{on} ~\{0\}\times\Omega~.
\end{align}

\paragraph{Closed boundary model}
In this scenario, we assume the domain~$\Omega$ is surrounded by a~solid wall, i.\,e., $\partial\Omega = \partial\Omega^\mathrm{wall}$, where $\partial\Omega^\mathrm{wall}$ denotes the fluid--solid interface.
We employ the no-slip boundary condition (homogeneous Dirichlet condition):
\begin{align}\label{eq:NS:closed_boundary_model_BC}
\vec{v} ~=~ \vec{0} && \text{on}~(0,\,T)\times\partial\Omega~.
\end{align}
Due to the boundary condition \cref{eq:NS:closed_boundary_model_BC}, the pressure $p$ is uniquely defined up to an additive constant. To close this system, we also assume that the mean pressure in $\Omega$ equals zero:
\begin{equation}\label{eq:NS:close_system_mean_p}
\int_\Omega p ~=~ 0~.
\end{equation}
Benefitting from boundary condition \cref{eq:NS:closed_boundary_model_BC}, the closed boundary Navier--Stokes model is an energy dissipative system. For regular enough solutions, a brief proof as follows.
\begin{proposition}\label{thm:NS:energy_dissipation}
The kinetic energy of system of problem $\{\eqref{eq:GoverningEquationsNS},\eqref{eq:NS:IC},\eqref{eq:NS:closed_boundary_model_BC},\eqref{eq:NS:close_system_mean_p}\}$ is non-increasing in time, i.\,e., we have the identity, for all $t\in (0,\, T)$:
\begin{equation}\label{eq:NS:energy_dissipation}
\frac{\dd}{\dd t} \int_\Omega \frac{1}{2}\rho_0\abs{\vec{v}}^2 ~=~ -\int_{\Omega} \mu_\mathrm{s}\grad{\vec{v}}:\grad{\vec{v}} ~\leq~ 0~.
\end{equation}
\end{proposition}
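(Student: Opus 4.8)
The plan is to derive the classical kinetic-energy identity by testing the momentum equation~\eqref{eq:NavierStokesEquation} against the velocity~$\vec v$ itself and integrating over~$\Omega$. Since the statement is only asserted for regular enough solutions, I take for granted that every integral below is well defined and that differentiation under the integral sign is licit.

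First I would take the $L^2(\Omega)$ inner product of~\eqref{eq:NavierStokesEquation} with~$\vec v$, which produces four contributions. The time-derivative term is immediate: because $\rho_0$ is constant, $\int_\Omega \rho_0\,\partial_t\vec v\cdot\vec v = \frac{\dd}{\dd t}\int_\Omega \frac{1}{2}\rho_0\abs{\vec v}^2$. The diffusion term is treated with Green's formula, $-\int_\Omega \mu_\mathrm{s}\laplace\vec v\cdot\vec v = \int_\Omega \mu_\mathrm{s}\grad\vec v:\grad\vec v - \int_{\partial\Omega}\mu_\mathrm{s}(\grad\vec v\,\normal)\cdot\vec v$, and the boundary integral vanishes by the no-slip condition~\eqref{eq:NS:closed_boundary_model_BC}. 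The pressure term integrates by parts to $-\int_\Omega\grad p\cdot\vec v = \int_\Omega p\,\div\vec v - \int_{\partial\Omega} p\,\vec v\cdot\normal$, where the first piece is zero by the incompressibility constraint~\eqref{eq:IncompressibilityConstraint} and the second again by~\eqref{eq:NS:closed_boundary_model_BC}.

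The only term requiring a little care is the convective term $\int_\Omega\rho_0(\vec v\cdot\grad\vec v)\cdot\vec v$. I would rewrite the integrand as $\rho_0\,\vec v\cdot\grad\big(\frac{1}{2}\abs{\vec v}^2\big)$, integrate by parts to obtain $-\int_\Omega \frac{1}{2}\rho_0\abs{\vec v}^2\,\div\vec v + \int_{\partial\Omega}\frac{1}{2}\rho_0\abs{\vec v}^2\,\vec v\cdot\normal$, and note that the volume term vanishes by~\eqref{eq:IncompressibilityConstraint} and the boundary term by~\eqref{eq:NS:closed_boundary_model_BC}; hence the convective term contributes nothing. Collecting the four contributions gives $\frac{\dd}{\dd t}\int_\Omega\frac{1}{2}\rho_0\abs{\vec v}^2 + \int_\Omega\mu_\mathrm{s}\grad\vec v:\grad\vec v = 0$, which is~\eqref{eq:NS:energy_dissipation}; the inequality then follows because $\grad\vec v:\grad\vec v = \abs{\grad\vec v}^2\ge 0$ and $\mu_\mathrm{s}>0$.

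The mean-pressure normalization~\eqref{eq:NS:close_system_mean_p} is not needed for the estimate, as it only serves to make $p$ unique. The main (and rather mild) obstacle is purely the bookkeeping of boundary terms: one must check that each surface integral genuinely vanishes, which it does here because $\vec v=\vec 0$ on all of $\partial\Omega$; the nonlinear convective term is the place where a careless integration by parts could go wrong, so I would write that step out explicitly.
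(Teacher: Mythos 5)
Your proposal is correct and follows essentially the same route as the paper: test the momentum equation with $\vec v$, integrate by parts term by term, and kill every volume and boundary contribution using the incompressibility constraint and the no-slip condition. The only cosmetic difference is that the paper passes through the identity $\vec v\cdot\grad\vec v = \tfrac{1}{2}\grad(\vec v\cdot\vec v) - \vec v\times\curl\vec v$ before the triple product drops out, whereas you write $(\vec v\cdot\grad\vec v)\cdot\vec v = \vec v\cdot\grad\bigl(\tfrac{1}{2}\abs{\vec v}^2\bigr)$ directly; the two are the same computation.
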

\begin{proof}
Taking the time derivative, based on the assumptions that the order of differentiation and integration can be changed, and by chain rule, we have pointwise in~$t$:
\begin{equation*}
\frac{\dd}{\dd t} \int_\Omega \frac{1}{2}\rho_0\abs{\vec{v}}^2 
~=~ \int_\Omega \frac{\partial}{\partial t} \Big(\frac{1}{2}\rho_0\abs{\vec{v}}^2\Big)
~=~ \int_\Omega \vec{v}\cdot(\rho_0\partial_t\vec{v})~.
\end{equation*}
By \cref{eq:NavierStokesEquation}, it follows
\[
\frac{\dd}{\dd t} \int_\Omega \frac{1}{2}\rho_0\abs{\vec{v}}^2
~=~ \int_\Omega -\rho_0\vec{v}\cdot(\vec{v}\cdot\grad{\vec{v}}) + \int_\Omega \mu_\mathrm{s}\vec{v}\cdot\laplace{\vec{v}} - \int_\Omega \vec{v}\cdot\grad{p}~.
\]
Using the incompressibility constraint \cref{eq:IncompressibilityConstraint} and the boundary condition \cref{eq:NS:closed_boundary_model_BC}, the following identities imply \cref{eq:NS:energy_dissipation}
\begin{align*}
\int_\Omega \vec{v}\cdot(\vec{v}\cdot\grad{\vec{v}})
~&=~ \int_\Omega \Big(\frac{1}{2}\vec{v}\cdot\grad{(\vec{v}\cdot\vec{v})} - \vec{v}\cdot(\vec{v}\times\curl{\vec{v}})\Big) \\
~&=~ \frac{1}{2} \int_\Omega \Big(-\abs{\vec{v}}^2\div{\vec{v}} + \div{(\abs{\vec{v}}^2\vec{v})}\Big)
~=~ \frac{1}{2} \int_{\partial\Omega} \abs{\vec{v}}^2\vec{v}\cdot\normal 
~=~ 0~,\\
\int_\Omega \vec{v}\cdot\laplace{\vec{v}} 
~&=~ \sum_{i=1}^{d} \int_\Omega v^i\laplace{v^i}
~=~ \sum_{i=1}^{d} \int_\Omega \Big(-\grad{v^i}\cdot\grad{v^i} + \div{(v^i\grad{v^i})}\Big) \\
~&=~ -\sum_{i=1}^{d} \int_\Omega \abs{\grad{v^i}}^2 + \sum_{i=1}^{d} \int_{\partial\Omega} v^i\grad{v^i}\cdot\normal
~=~ -\int_\Omega \grad{\vec{v}}:\grad{\vec{v}}~,\\
\int_\Omega \vec{v}\cdot\grad{p} 
~&=~ \int_\Omega \Big(-p\div{\vec{v}} + \div{(p\vec{v})}\Big)
~=~ \int_{\partial\Omega} p\vec{v}\cdot\normal
~=~ 0~.
\end{align*}
\end{proof}
The energy dissipation law still holds in case of changing the diffusion term $-\mu_\mathrm{s}\laplace{\vec{v}}$ in \cref{eq:NavierStokesEquation} to $-2\div\big(\mu_\mathrm{s}\strain(\vec{v})\big)$\,. The modification of the proof can be easily obtained by using the tensor identity 
\begin{equation*}
\div{\big(\mu_\mathrm{s}\strain(\vec{v})\vec{v}\big)} ~=~ \mu_\mathrm{s}\,\strain(\vec{v}):\strain(\vec{v}) + \div{\big(\mu_\mathrm{s}\strain(\vec{v})\big)}\cdot\vec{v}~,
\end{equation*}
%cf.~\Cref{sec:appendix}. 
Note here, in this general case, $\mu_\mathrm{s}$ may vary in space. 

\paragraph{Open boundary model}
In addition to the impervious fluid--solid interface $\partial\Omega^\mathrm{wall}$, we partition $\partial\Omega$ into the inflow boundary and outflow boundary, which are defined as follows:
\begin{align*}
\partial\Omega^\mathrm{in} ~&=~ \{\vec{x}\in\partial\Omega:~\vec{v}\cdot\normal~<~0\}~,\\
\partial\Omega^\mathrm{out} ~&=~ \partial\Omega \setminus (\partial\Omega^\mathrm{in}\cup\partial\Omega^\mathrm{wall})~.
\end{align*}
The fluid enters the domain $\Omega$ through $\partial\Omega^\mathrm{in}$ and exits through $\partial\Omega^\mathrm{out}$. The unit normal
vector, $\normal$, is outward to $\Omega$. In this model, we prescribe inhomogeneous Dirichlet boundary condition on the inflow boundary, no-slip condition at the fluid--rock interface, and a Neumann-type condition on the outflow boundary
\begin{subequations}\label{eq:NS:open_boundary_model_BC}
\begin{align}
\vec{v} ~&=~ \vec{v}^\mathrm{in} && \text{on}~(0,\,T)\times\partial\Omega^\mathrm{in}~,\label{eq:NS:open_boundary_model_inflow}\\
\vec{v} ~&=~ \vec{0} && \text{on}~(0,\,T)\times\partial\Omega^\mathrm{wall}~,\label{eq:NS:open_boundary_model_wall}\\
\mu_\mathrm{s}\grad{\vec{v}}\,\normal  - p\,\normal ~&=~ \vec{0} && \text{on}~(0,\,T)\times\partial\Omega^\mathrm{out}~.\label{eq:NS:open_boundary_model_outflow}
\end{align}
\end{subequations}
Due to the fact that both \cref{eq:NS:open_boundary_model_inflow} and \cref{eq:NS:open_boundary_model_wall} are Dirichlet-type conditions, we define
\begin{equation*}
\partial\Omega^\mathrm{D} ~=~ \partial\Omega^\mathrm{in}\cup\partial\Omega^\mathrm{wall}~.
\end{equation*}
In addition, we define $\vec{v}_\mathrm{D}$ on $\partial\Omega^\mathrm{in}$ equal $\vec{v}^\mathrm{in}$ and extend $\vec{v}_\mathrm{D}$ on $\partial\Omega^\mathrm{wall}$ by $\vec{0}$.
The condition on $\partial\Omega^\mathrm{out}$ is also called the open boundary condition. The motivation of introducing \cref{eq:NS:open_boundary_model_outflow} is that it enables us to simulate on a bounded computational domain to obtain the results identical to the case of fluid flow over unbounded or a relatively large region. Details for deriving \cref{eq:NS:open_boundary_model_outflow} are provided in \cite{bruneau2000boundary} and this condition has been verified to be well-suited in modeling parallel flows \cite{rannacher1996artificial}.
\par
The open boundary condition implies a zero mean pressure over outflow boundaries, which may cause undesirable effects within the flow region if more than one outlet exists~\cite{rannacher2000finite}.
Another alternative is to use the diffusion term $-2\div\big(\mu_\mathrm{s}\strain(\vec{v})\big)$ instead of $-\mu_\mathrm{s}\laplace{\vec{v}}$ in \cref{eq:NavierStokesEquation}. Then, the boundary condition~\cref{eq:NS:open_boundary_model_outflow} has to be modified to
\begin{align*}
2\mu_\mathrm{s}\strain{(\vec{v})}\,\normal  - p\,\normal ~=~ \vec{0} && \text{on}&~(0,\,T)\times\partial\Omega^\mathrm{out}~.
\end{align*}
Unfortunately, that choice leads to streamlines spreading outward at the outlet. 
\Cref{fig:NS:strain_effect} shows Poiseuille flow in a cylinder for two choices of diffusion operators. In the left figure, the operator
$-\mu_\mathrm{s}\laplace{\vec{v}}$ is employed and in the right figure, the operator $-2\div\big(\mu_\mathrm{s}\strain(\vec{v})\big)$ is used.
We observe unphysical streamlines at the outflow boundary for the right figure.
\begin{figure}[ht!]
\centering
\begin{tabularx}{\linewidth}{@{}C@{}C@{}}
\includegraphics[width=0.4\textwidth]{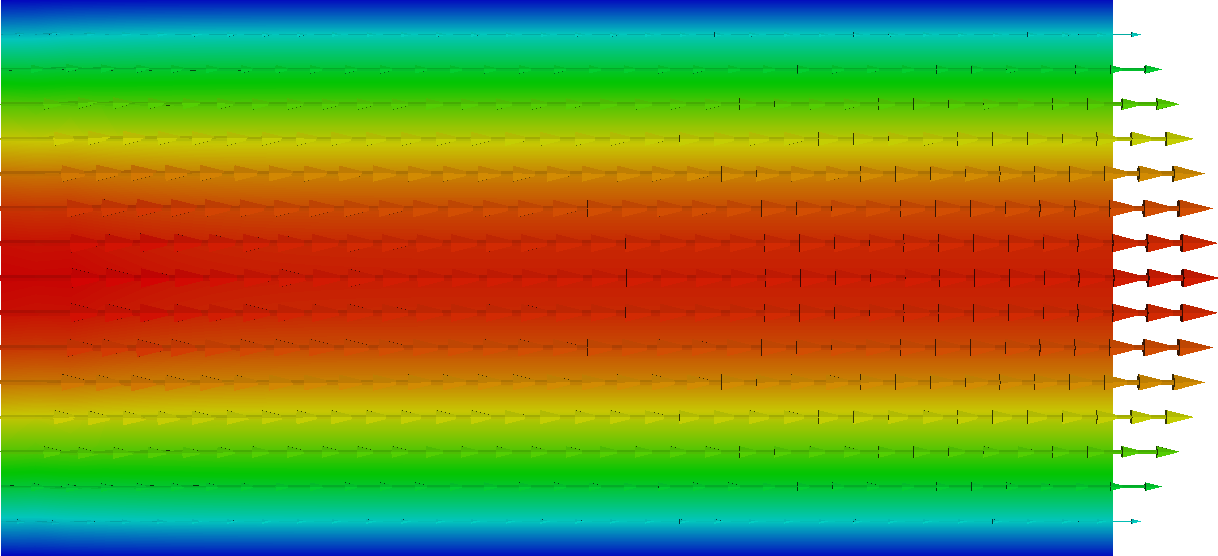} &
\includegraphics[width=0.4\textwidth]{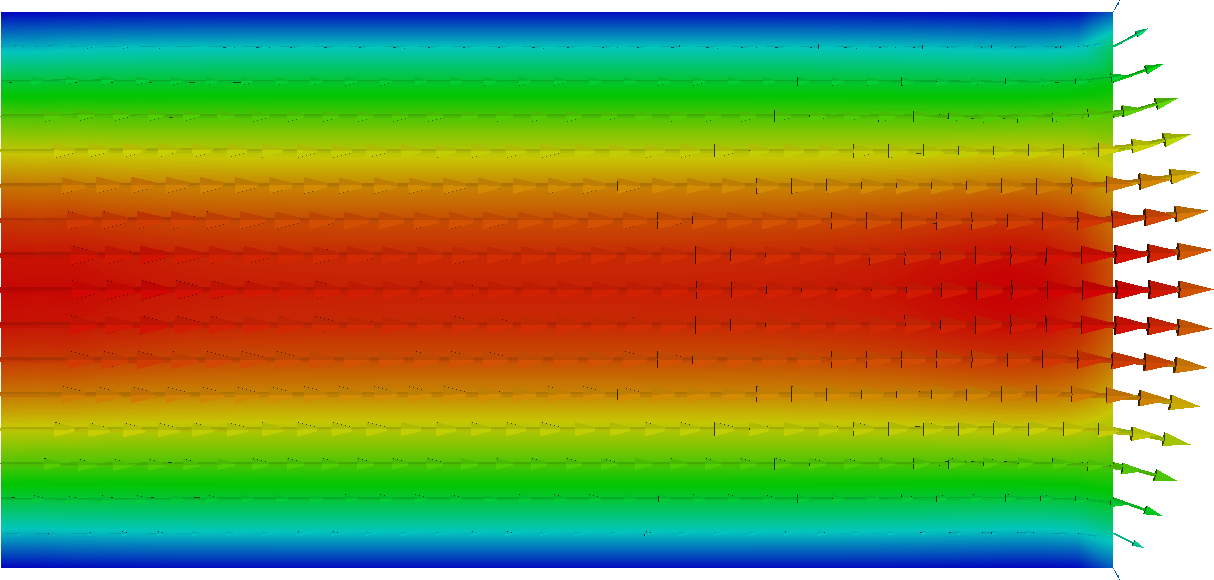}
\end{tabularx}
\caption{Left: open boundary cylindrical simulation with diffusion term $-\mu_\mathrm{s}\laplace{\vec{v}}$. Right: unphysical phenomenon of using deformation tensor $-2\div\big(\mu_\mathrm{s}\strain(\vec{v})\big)$.}
\label{fig:NS:strain_effect}
\end{figure}
\par
Throughout this paper, we focus on problems which are defined on channels with only one outlet and governed by \cref{eq:GoverningEquationsNS}. Thus we avoid these drawbacks.

\subsection{Nondimensionalization}
The nondimensionalization process removes physical dimensions from the fluid mechanics model while producing dimensionless parameters which govern a whole family of dynamically similar flows. An advantage of analyzing problems in nondimensional form is the reduction of parameters. A brief review of the physical quantities and their units in the Navier--Stokes model is provided in \Cref{tab:NS_QuantitiesUnits}. 
\begin{table}[ht]\centering
\begin{tabular}{@{}ccc@{~~|~~}ccc@{}}
\toprule
\textbf{Quantity} & \textbf{Symbol} & \textbf{Unit} & 
\textbf{Quantity} & \textbf{Symbol} & \textbf{Unit} \\
\midrule 
time                & $t$              & $\mathrm{s}$                                    &
length              & $x$              & $\mathrm{m}$                                    \\
velocity            & $\vec{v}$        & $\mathrm{m}\,\mathrm{s}^{-1}$                   & 
pressure            & $p$              & $\mathrm{kg}\,\mathrm{m}^{-1}\,\mathrm{s}^{-2}$ \\
mass density        & $\rho_0$         & $\mathrm{kg}\,\mathrm{m}^{-3}$                  &
shear viscosity     & $\mu_\mathrm{s}$ & $\mathrm{kg\,m^{-1}\,s^{-1}}$                   \\
\bottomrule
\end{tabular}
\caption{Physical quantities used in the Navier--Stokes model and its units.}
\label{tab:NS_QuantitiesUnits}
\end{table} 
\par
Let $x_\mathrm{c}$ denote the characteristic length, $t_\mathrm{c}$ denote the characteristic time, $v_\mathrm{c}$ denote the characteristic velocity, and $p_\mathrm{c}$ denote the characteristic pressure, respectively. For simplicity, we still employ the same symbols for dimensionless quantities. With the choice $t_\mathrm{c}=x_\mathrm{c}/v_\mathrm{c}$\,, the dimensionless version of \cref{eq:GoverningEquationsNS} becomes
\begin{subequations}\label{eq:GoverningEquationsNS_ndim}
\begin{align}
\partial_t \vec{v} + \vec{v}\cdot\grad{\vec{v}} -\frac{1}{\Rey}\laplace{\vec{v}} ~&=~ -\frac{1}{\mathrm{X}}\grad{p} &&\text{in}~(0,\,T)\times\Omega~, \label{eq:NavierStokesEquation_ndim} \\
\div\vec{v} ~&=~ 0 && \text{in}~(0,\,T)\times\Omega~, \label{eq:IncompressibilityConstraint_ndim}
\end{align}
coupled with the initial and boundary conditions of the open boundary model in \Cref{sec:Boundary_conditions}:
\begin{align}
\vec{v} ~&=~ \vec{v}^0 && \text{on}~\{0\}\times\Omega~,\label{eq:NS:open_boundary_model_init_ndim}\\
\vec{v} ~&=~ \vec{v}^\mathrm{in} && \text{on}~(0,\,T)\times\partial\Omega^\mathrm{in}~,\label{eq:NS:open_boundary_model_inflow_ndim}\\
\vec{v} ~&=~ \vec{0} && \text{on}~(0,\,T)\times\partial\Omega^\mathrm{wall}~,\label{eq:NS:open_boundary_model_wall_ndim}\\
\grad{\vec{v}}\,\normal - \frac{\Rey}{\mathrm{X}}p\,\normal ~&=~ \vec{0} && \text{on}~(0,\,T)\times\partial\Omega^\mathrm{out}~.\label{eq:NS:open_boundary_model_outflow_ndim}
\end{align}
\end{subequations}
In \cref{eq:GoverningEquationsNS_ndim}, $\Rey$ denotes the Reynolds number. We have:
\begin{equation*}
\Rey ~=~ \frac{\rho_0\,v_\mathrm{c}\,x_\mathrm{c}}{\mu_\mathrm{s}}  ~=~ \frac{\rho_0\,v_\mathrm{c}^2}{\mu_\mathrm{s}\,v_\mathrm{c}/x_\mathrm{c}}~,
\end{equation*}
which means that the Reynolds number expresses the ratio of inertia to viscous forces. The other dimensionless parameter in \cref{eq:GoverningEquationsNS_ndim} is
\begin{equation*}
\mathrm{X} ~=~ \frac{\rho_0\,v_\mathrm{c}^2}{p_\mathrm{c}}~.
\end{equation*}
%This parameter may take different expressions depending on the choice of $p_\mathrm{c}$\,. For high velocity flows, we choose $p_\mathrm{c}=\rho_0 v_\mathrm{c}^2$\,, which leads to $\mathrm{X}=1$; for creeping flows (high viscosity), we choose $p_\mathrm{c}=\mu_\mathrm{s}\,v_\mathrm{c}/x_\mathrm{c}$\,, which leads $\mathrm{X}=\Rey$; for capillary dominant flows, we choose $p_\mathrm{c} = \sigma_\mathrm{AB}/x_\mathrm{c}$, which leads to $\mathrm{X}=\Rey\,\Ca$. In the third case, $\sigma_\mathrm{AB}$ is the interfacial tension between the two phases and $\Ca$ denotes the Capillary number, i.\,e.,
%\begin{equation*}
%\Ca ~=~ \frac{\mu_\mathrm{s}v_\mathrm{c}}{\sigma_\mathrm{AB}}~,
%\end{equation*}
%which expresses the ratio between viscous forces and interfacial tension forces. A suitable choice of characteristic quantities ensure moderate orders of magnitude of the dimensionless quantities. The third way above is widely used in nondimensionalization the Cahn--Hilliard--Navier--Stokes-type phase-field model, which is increasingly popular in studying the two-component binary flow, for more details we refer to \cite{FrankEtAl2017Direct}.    
%

%%%%%%%%%%%%%%%%%%%%%%%%%%%%%%%%%%%%%%%%%%%%%%%%%%%%%%%%%%%%%%%%%%%%%%%%%%%%%%%%%%%%%%%%%%%%%%%%%%%%%%%%%%%%%%%%%%%%%%%
\section{Numerical Scheme}\label{sec:NS:numerical_scheme}
%%%%%%%%%%%%%%%%%%%%%%%%%%%%%%%%%%%%%%%%%%%%%%%%%%%%%%%%%%%%%%%%%%%%%%%%%%%%%%%%%%%%%%%%%%%%%%%%%%%%%%%%%%%%%%%%%%%%%%%
Direct numerical simulation of fluid flow in porous media needs to solve the time-dependent highly nonlinear Navier--Stokes equations. For this reason, we carefully design a numerical method to solve this system on large complex domains by taking both numerical accuracy and computational efficiency into account. %After balancing all considerations, we choose to utilize a pressure-correction projection algorithm, in conjunction with the IPDG scheme, to overcome the difficulties in numerical simulation of the incompressible Navier--Stokes problem \cref{eq:GoverningEquationsNS_ndim}. The nonlinear system is linearized by Picard splitting and a div--div correction postprocessing technique ensures the solenoidal velocity field. Our design is especially convenient for the large-scale three-dimensional numerical simulations.

\subsection{Preliminaries}\label{sec:NS:scheme_preliminaries}
\paragraph{Domain and triangulation}
Let $\setE_h = \{E_k\}$ be a family of conforming non-degenerate (also called regular) meshes of the domain $\Omega$. The parameter $h$ denotes the maximum element diameter. Let $\Gammah$ donate the set of interior faces. For each interior face $e \in \Gammah$ shared by elements $E_{k^-}$ and $E_{k^+}$, we define a unit normal vector $\normal_e$ points from $E_{k^-}$ into $E_{k^+}$. For the face $e$ on boundary $\partial\Omega$, i.\,e., $e = E_{k^-} \cap \partial\Omega$, the normal $\normal_e$ is taken to be the unit outward vector to $\partial\Omega$.
\par
We denote by $\IP_1(E)$ the set of linear polynomials on element $E$ and we define
\begin{align*}
\IP_1(\setE_h) ~=~ \prod_{E_k\in\setE_h} \IP_1(E_k)~.
\end{align*}
%The natural spaces to work with DG methods are the broken Sobolev spaces. For dimension $d=3$ and any real number $q$, we define
%\begin{equation*}
%H^q(\setE_h)^d ~=~ \big\{\vec{v}\in L^2(\Omega)^d:~\forall E \in \setE_h,\, \on{\vec{v}}{E} \in H^q(E)^d \big\}~. 
%\end{equation*}
%Note the members of $H^q(\setE_h)^d$ can be multivalued on interior faces, which means they are not functions in the proper sense. Thus we need pay special attention to deal with boundary traces. 
The average and jump operator for any vector quantity $\vec{v}$ belonging to $\IP_1(\setE_h)^3$ are defined by
\begin{align*}
\avg{\vec{v}} ~=~ 
\begin{Bmatrix*}[l]
\frac{1}{2}\on{\vec{v}}{E_{k^-}} + \frac{1}{2}\on{\vec{v}}{E_{k^+}} & \text{if}~e = E_{k^-} \cap E_{k^+}~ \\
\on{\vec{v}}{E_{k^-}} & \text{if}~e = E_{k^-} \cap \partial\Omega~ \end{Bmatrix*}~, 
&&
\jump{\vec{v}} ~=~ 
\begin{Bmatrix*}[l]
\on{\vec{v}}{E_{k^-}} - \on{\vec{v}}{E_{k^+}}~~~~~~ & \text{if}~e = E_{k^-} \cap E_{k^+}~ \\
\on{\vec{v}}{E_{k^-}}~~~~~~ & \text{if}~e = E_{k^-} \cap \partial\Omega~ \end{Bmatrix*}~.
\end{align*}
Note for scalar quantities, the jump and average definitions are similar.

\paragraph{DG forms}
%With respect to the boundary conditions in \cref{eq:GoverningEquationsNS_ndim}, we introduce the following DG forms
%\begin{align*}
%a_{\mathcal{C}}: H^2(\setE_h)^d \times H^2(\setE_h)^d \times H^2(\setE_h)^d \rightarrow \IR~, &&
%b_{\mathcal{C}}: H^2(\setE_h)^d \rightarrow \IR~, \\
%a_{\strain}: H^2(\setE_h)^d \times H^2(\setE_h)^d \rightarrow \IR~, &&
%b_{\strain}: H^2(\setE_h)^d \rightarrow \IR~, \\
%b_\mathcal{P}: H^1(\setE_h) \times H^2(\setE_h)^d \rightarrow \IR~, 
%\end{align*} 
The DG discretization of the convection term $\vec{v}\cdot\grad{\vec{v}}$\,, the diffusion term $-\laplace{\vec{v}}$\,, and the pressure term $-\grad{p}$, with the boundary conditions \cref{eq:GoverningEquationsNS_ndim}, are
\begin{subequations}
\begin{align}
\begin{split}\label{eq:NS:DG_convection_a}
a_{\mathcal{C}}(\vec{u},\vec{z},\vec{\theta}) ~=~&
- \sum_{E\in\setE_h}\int_E (\vec{u}\cdot\grad\vec{\theta})\cdot\vec{z} 
- \frac{1}{2} \sum_{E\in\setE_h}\int_E (\div{\vec{u}})\,\vec{z}\cdot{\vec{\theta}}\\
&+ \sum_{e\in\Gammah\cup\partial\Omega^\mathrm{out}}\int_e \avg{\vec{u}\cdot\normal_e}\,\upwind{\vec{z}}\cdot\jump{\vec{\theta}} 
+ \frac{1}{2} \sum_{e\in\Gammah}\int_e \jump{\vec{u}\cdot\normal_e}\avg{\vec{z}\cdot\vec{\theta}}~,
\end{split}\\
b_{\mathcal{C}}(\vec{\theta}) ~=~& 
- \sum_{e\in\partial\Omega^{\mathrm{in}}} \int_e (\vec{v}_\mathrm{D}\cdot\vec{n})(\vec{v}_\mathrm{D}\cdot\vec{\theta})~, \label{eq:NS:DG_convection_b}\\
\begin{split}\label{eq:NS:DG_strain_a}
a_\strain(\vec{v}, \vec{\theta}) ~=~& 
\sum_{E\in\setE_h} \int_E \grad{\vec{v}}:\grad{\vec{\theta}} 
- \sum_{e\in\Gammah\cup\partial\Omega^\mathrm{D}} \int_e\avg{\grad{\vec{v}}\cdot\normal_e}\cdot\jump{\vec{\theta}}\\
&+ \epsilon\sum_{e\in\Gammah\cup\partial\Omega^\mathrm{D}} \int_e\avg{\grad{\vec{\theta}}\cdot\normal_e}\cdot\jump{\vec{v}}
+ \frac{\sigma}{h}\sum_{e\in\Gammah\cup\partial\Omega^\mathrm{D}} \int_e \jump{\vec{v}}\cdot\jump{\vec{\theta}}~,
\end{split}\\
b_\strain(\vec{\theta}) ~=~& 
\epsilon \sum_{e\in\partial\Omega^\mathrm{D}} \int_e (\grad{\vec{\theta}}\cdot\vec{n}_e)\cdot \vec{v}_\mathrm{D} + \frac{\sigma}{h} \sum_{e\in\partial\Omega^\mathrm{D}}\int_e \vec{v}_\mathrm{D}\cdot\vec{\theta}~, \label{eq:NS:DG_strain_b}\\
b_{\mathcal{P}}(p,\vec{\theta}) ~=~&
\sum_{E\in\setE_h} \int_E p\div{\vec{\theta}}
-\sum_{e\in\Gammah\cup\partial\Omega^\mathrm{D}} \int_e \avg{p}\jump{\vec{\theta}\cdot\normal_e}~. \label{eq:NS:DG_pressure}
\end{align}
For the upwind flux in \cref{eq:NS:DG_convection_a}, we define  the upwind quantity $\upwind{\vec{z}}$ of the vector quantity $\vec{z}$ on a face $e$ with normal $\vec{n}_e=\on{\vec{n}}{E_{k^-}}$ by
\begin{align*}
\on{\vec{z}^\uparrow}{e\in\Gammah} =
\begin{Bmatrix*}[l]
\on{\vec{z}}{E_{k^-}} & \text{if}~\avg{\vec{u}}\cdot\vec{n}_e \geq 0~ \\
\on{\vec{z}}{E_{k^+}} & \text{if}~\avg{\vec{u}}\cdot\vec{n}_e < 0~ \end{Bmatrix*}
&& \text{and} &&
\on{\vec{z}^\uparrow}{e\in\partial\Omega} = 
\begin{Bmatrix*}[l]
\on{\vec{z}}{E_{k^-}} & \text{if}~\vec{u}\cdot\vec{n}_e \geq 0~ \\
\vec{0} & \text{if}~\vec{u}\cdot\vec{n}_e < 0~ \end{Bmatrix*}~.
\end{align*}
In addition, the discretization of the elliptic operator $-\laplace{p}$, which will be employed in the
pressure correction step, is 
\begin{align}
\begin{split}
a_{\mathrm{ellip}}(p,\chi) ~=~&
\sum_{E\in\mathcal{E}_h} \int_E \grad p \cdot \grad \chi
-\sum_{e\in\Gammah\cup\partial\Omega^\mathrm{out}} \int_e \avg{\grad p \cdot \normal_e} \jump{\chi}\\
&+\epsilon\sum_{e\in\Gammah\cup\partial\Omega^\mathrm{out}} \int_e \avg{\grad \chi \cdot \normal_e} \jump{p} + \frac{\sigma}{h} \sum_{e\in\Gammah\cup\partial\Omega^\mathrm{out}}\int_e \jump{p}\jump{\chi}~.
\end{split}\label{eq:NS:DG_diffusion_a}
\end{align}
\end{subequations}
The choice of the parameter $\epsilon$ in \cref{eq:NS:DG_strain_a,eq:NS:DG_strain_b,eq:NS:DG_diffusion_a} yields the NIPG method ($\epsilon=1$) and the SIPG method ($\epsilon=-1$). For NIPG method, the penalty parameter $\sigma$ can be set to $1$ and for SIPG method a large enough value is needed. A discussion about selecting the value of $\sigma$ is provided in \cite{shahbazi2005explicit,EpshteynRiviere2007}. For the derivation and properties of these DG forms, we refer to \cite{riviere2008,GiraultRiviereWheeler2005} and the references herein.

\subsection{Time discretization}\label{sec:NS:time_discretization}
In \cref{eq:GoverningEquationsNS_ndim}, the velocity and pressure are coupled by the incompressibility constraint. The pressure-correction projection algorithm, which splits dynamics and incompressibility, is especially convenient for the large-scale three-dimensional numerical simulations. We now describe the splitting in time of the equations.
\par
Uniformly partition $[0,\,T]$ into $N$ subintervals and let $\tau$ denote the time step length. For any $1 \leq n \leq N$\,, given the solution at the previous time,  $(\vec{v}^{n-1},p^{n-1},\phi^{n-1})$, the algorithm yields the solution at the current time in three steps. In the first velocity step, we employ Picard splitting to treat the nonlinear convection term and compute $\vec{v}^n$ by solving the system:
\begin{subequations}\label{eq:NS:time_dis_velocity_step}
\begin{align}
\vec{v}^n + \tau\vec{v}^{n-1}\cdot\grad{\vec{v}^n} - \frac{\tau}{\Rey}\laplace{\vec{v}^n} ~&=~ \vec{v}^{n-1} - \frac{\tau}{\mathrm{X}}\grad{(p^{n-1}+\phi^{n-1})} && \text{in}~\Omega~,\\
\vec{v}^n ~&=~ \vec{v}^\mathrm{in}(t^n) && \text{on}~\partial\Omega^\mathrm{in}~,\\
\vec{v}^n ~&=~ \vec{0} && \text{on}~\partial\Omega^\mathrm{wall}~,\\
\grad{\vec{v}}^n \normal - \frac{\Rey}{\mathrm{X}}p^{n-1}\, \normal ~&=~ \vec{0} && \text{on}~\partial\Omega^\mathrm{out}~.
\end{align}
\end{subequations} 
Then, in the pressure projection step, we solve a Poisson problem to compute the correction $\phi^n$: 
\begin{subequations}\label{eq:NS:time_dis_pressure_step}
\begin{align}
-\laplace{\phi^n} ~&=~ -\frac{\mathrm{X}}{\tau}\div{\vec{v}^n} && \text{in}~\Omega~,\\
\grad{\phi^n}\cdot\normal ~&=~ 0 && \text{on}~\partial\Omega^\mathrm{D}~,\\
\phi^n ~&=~ 0 && \text{on}~\partial\Omega^\mathrm{out}~.
\end{align}
\end{subequations}
Finally, in the postprocessing step, we update the pressure $p^n$ and obtain a solenoidal velocity field $\vec{u}^n$: 
\begin{subequations}\label{eq:NS:time_dis_post_step}
\begin{align}
p^n ~&=~ p^{n-1} + \phi^n - \frac{\mathrm{X}}{\Rey}\,\div{\vec{v}^n}~,\label{eq:NS:time_dis_post_step_p}\\
\vec{u}^n ~&=~ \vec{v}^n - \frac{\tau}{\mathrm{X}}\,\grad{\phi^n}~.\label{eq:NS:time_dis_post_step_u}
\end{align}
\end{subequations}
Here, for the initial condition, we set $p^0 = 0$ and $\phi^0 = 0$. Note that both $\vec{v}^n$ and $\vec{u}^n$ are approximations of the velocity and $\vec{v}^n$ satisfies the Dirichlet boundary condition while possibly violating an incompressibility constraint, whereas $\vec{u}^n$ is solenoidal but satisfies the Dirichlet condition only in the normal direction.

\subsection{Space discretization}\label{sec:NS:space_discretization}
\paragraph{Standard algorithm}
We use IPDG method for spatial discretization for each of the three steps. The partition of the domain, DG operators, and related notation are introduced in \Cref{sec:NS:scheme_preliminaries}. We also use the notation $(\cdot,\cdot)$ for the L$^2$ inner-product on $\Omega$.\\
The fully discrete scheme of the velocity step \cref{eq:NS:time_dis_velocity_step} reads: for any $1 \leq n \leq N$\,, find $\vec{v}^n_h\in\IP_1(\setE_h)^3$ such that for all $\vec{\theta}_h\in\IP_1(\setE_h)^3$:
\begin{align}\label{eq:NS:space_dis_velocity_step}
\begin{split}
(\vec{v}^n_h,\vec{\theta}_h) + \tau\, a_{\mathcal{C}}(\vec{v}^{n-1}_h,\vec{v}^n_h,\vec{\theta}_h) \,+&\, \frac{\tau}{\Rey} a_\strain(\vec{v}^n_h, \vec{\theta}_h) ~=~ (\vec{v}^{n-1}_h,\vec{\theta}_h) \\
+&\, \frac{\tau}{\mathrm{X}} b_{\mathcal{P}}(p^{n-1}_h,\vec{\theta}_h) - \frac{\tau}{\mathrm{X}} (\grad{\phi^{n-1}_h},\vec{\theta}_h) + \tau b_{\mathcal{C}}(\vec{\theta}_h) + \frac{\tau}{\Rey} b_\strain(\vec{\theta}_h)~.
\end{split}
\end{align}
For the pressure projection step \cref{eq:NS:time_dis_pressure_step}, the fully-discrete scheme is: for any $1 \leq n \leq N$\,, find $\phi^n_h\in\IP_1(\setE_h)$ such that for all $\chi_h\in\IP_1(\setE_h)$:
\begin{align}\label{eq:NS:space_dis_pressure_step}
a_{\mathrm{ellip}}(\phi^n_h,\chi_h) ~=~ -\frac{\mathrm{X}}{\tau} (\div{\vec{v}^n_h},\chi_h)~.
\end{align}
For the postprocessing step \cref{eq:NS:time_dis_post_step}, the discrete pressure and velocity  are updated as follows: find $p^n_h\in\IP_1(\setE_h)$ and $\vec{u}^n_h\in\IP_1(\setE_h)^3$ such that for all  $\chi_h\in\IP_1(\setE_h)$
and $\vec{\theta}_h\in\IP_1(\setE_h)^3$:

\begin{subequations}\label{eq:NS:space_dis_post_step}
\begin{align}
(p^n_h,\chi_h) ~&=~ (p^{n-1}_h,\chi_h) + (\phi^n_h,\chi_h) - \frac{\mathrm{X}}{\Rey}(\div{\vec{v}^n_h},\chi_h)~,\label{eq:NS:space_dis_post_step_p}\\
(\vec{u}^n_h,\vec{\theta}_h) ~&=~ (\vec{v}^n_h,\vec{\theta}_h) - \frac{\tau}{\mathrm{X}}(\grad{\phi^n_h},\vec{\theta}_h)~.\label{eq:NS:space_dis_post_step_u}
\end{align}
The operator $\grad{}$ in \cref{eq:NS:space_dis_velocity_step,eq:NS:space_dis_post_step_u} denotes the broken gradient, the operator $\div{}$ in \cref{eq:NS:space_dis_pressure_step,eq:NS:space_dis_post_step_p} denotes the broken divergence. 
\begin{remark}
We note that when $\partial{\Omega}^\mathrm{in} = \emptyset$ and $\partial{\Omega}^\mathrm{out} = \emptyset$ the open boundary system \cref{eq:NS:open_boundary_model_BC} automatically degenerates to the closed boundary system with boundary condition in \cref{eq:NS:closed_boundary_model_BC}. In this case, the time discretization in \Cref{sec:NS:time_discretization} is still effective except that we end up with a pure Neumann system for \cref{eq:NS:time_dis_pressure_step}, which means the linear system generated by \cref{eq:NS:space_dis_pressure_step} will be singular. To overcome this challenge, the constraint in \cref{eq:NS:close_system_mean_p} should be considered. We can introduce a Lagrange multiplier to produce an optimization problem and solve it. We omit technical details here, since this case is not our focus, and for those who are interested in this methodology, we refer to \cite{bochev2005finite}. 
\end{remark}

\paragraph{Div--div correction}
It is worth to note that  the standard algorithm may not provide a pointwise divergence free velocity field. A classical remedy is to apply the div--div stabilization technique \cite{krank2017high}. The projection is stabilized by adding an additional term in \cref{eq:NS:space_dis_post_step_u}. We employ the following form:
\begin{align}\label{eq:NS:space_post_improved_v}
(\vec{u}^n_h,\vec{\theta}_h) 
+ \sigma_\mathrm{div} (\div{\vec{u}^n_h},\div{\vec{\theta}_h}) 
~=~ (\vec{v}^n_h,\vec{\theta}_h) - \frac{\tau}{\mathrm{X}}\,(\grad{\phi^n_h},\vec{\theta}_h)~,
\end{align} 
where $\sigma_\mathrm{div}$ denotes a penalty coefficient. The authors in \cite{krank2017high} provide a guideline on how to select the value of this coefficient. 
Comparing with the standard algorithm, applying the divergence correction according to~\cref{eq:NS:space_post_improved_v} gives acceptable results with quite small pointwise divergence in a~computational efficient way.  \Cref{Fig:NS:div_correction} shows an example of the impact of the div--div correction technique.  The porous medium is obtained by packing spheres, which is a popular method for approximating and studying porous media. A more detailed description of the sphere pack is given in \Cref{sec:NS:numerical_simulation}.  We take a shapshot of a sphere pack simulation at the second time step. \Cref{Fig:NS:div_correction} shows the velocity magnitude and the pointwise divergence field in both cases: with and without div--div correction.  Without div--div stabilization, the maximum pointwise divergence is $10^6$ times larger than the maximum pointwise divergence for the stabilized case.  %{\blue Note, at the steady state the pointwise divergence of velocity generated by standard algorithm will also much smaller than $10^1$. However without div--div correction, the pointwise divergence free property may not be guaranteed during the time evaluation, which is very important in unsteady flow simulation.}

%{\red BR: comment: it seems that this paper does not deal with closed boundaries.  Will we ever use that case in another paper?
%If so, let us remove it here.  If not, we should keep it.  Let me know.} {\blue CL: We do not deal with the closed boundaries simulations in this paper. In our CHNS viscosity paper we give an example of Spinodal decomposition to show the div--div correction can provide an almost zero divergence field. There that Spinodal decomposition was a pure closed boundary problem. See CHNS viscosity paper page 11 Figure 2.}

Note, for the case of closed boundaries (cf.~\Cref{sec:Boundary_conditions}), \cref{eq:NS:space_post_improved_v} may be additionally supplemented by a~jump-penalty term~\cite{akbas2017analogue}:
\begin{align*}
(\vec{u}^n_h,\vec{\theta}_h) 
+ \sigma_\mathrm{div} (\div{\vec{u}^n_h},\div{\vec{\theta}_h})  
+ \frac{\sigma_\mathrm{cont}}{h} \sum_{e \in \Gammah \cup \partial{\Omega}} \int_e \jump{\vec{u}^n_h\cdot\normal_e}\jump{\vec{\theta}_h\cdot\normal_e}
~=~ (\vec{v}^n_h,\vec{\theta}_h) - \frac{\tau}{\mathrm{X}}\,(\grad{\phi^n_h},\vec{\theta}_h)~.
\end{align*}
with jump-penalty parameter~$\sigma_\mathrm{cont}$.  However, the above technique increases the computational cost, especially for large-scale simulations. Since the div--div correction \cref{eq:NS:space_post_improved_v} satisfies the requirements for the considered pore-scale application, we use it for the numerical simulations in \Cref{sec:NS:numerical_simulation}. 
\end{subequations}
\begin{figure}[ht!]
\centering
\begin{tabular}{@{}c@{\qquad}c@{~~}c@{}}
\includegraphics[width=0.23\textwidth]{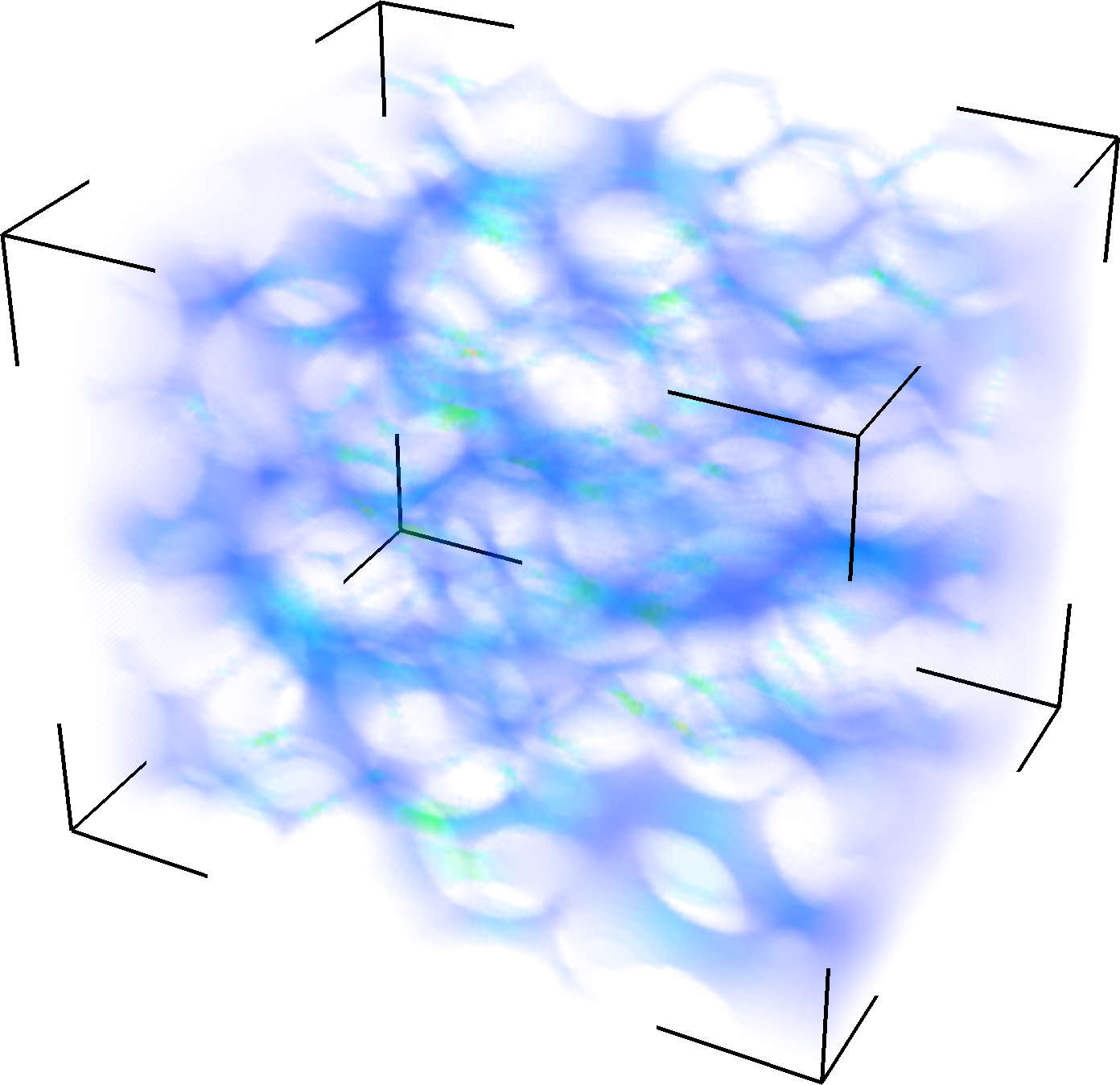}&
\includegraphics[width=0.231\textwidth]{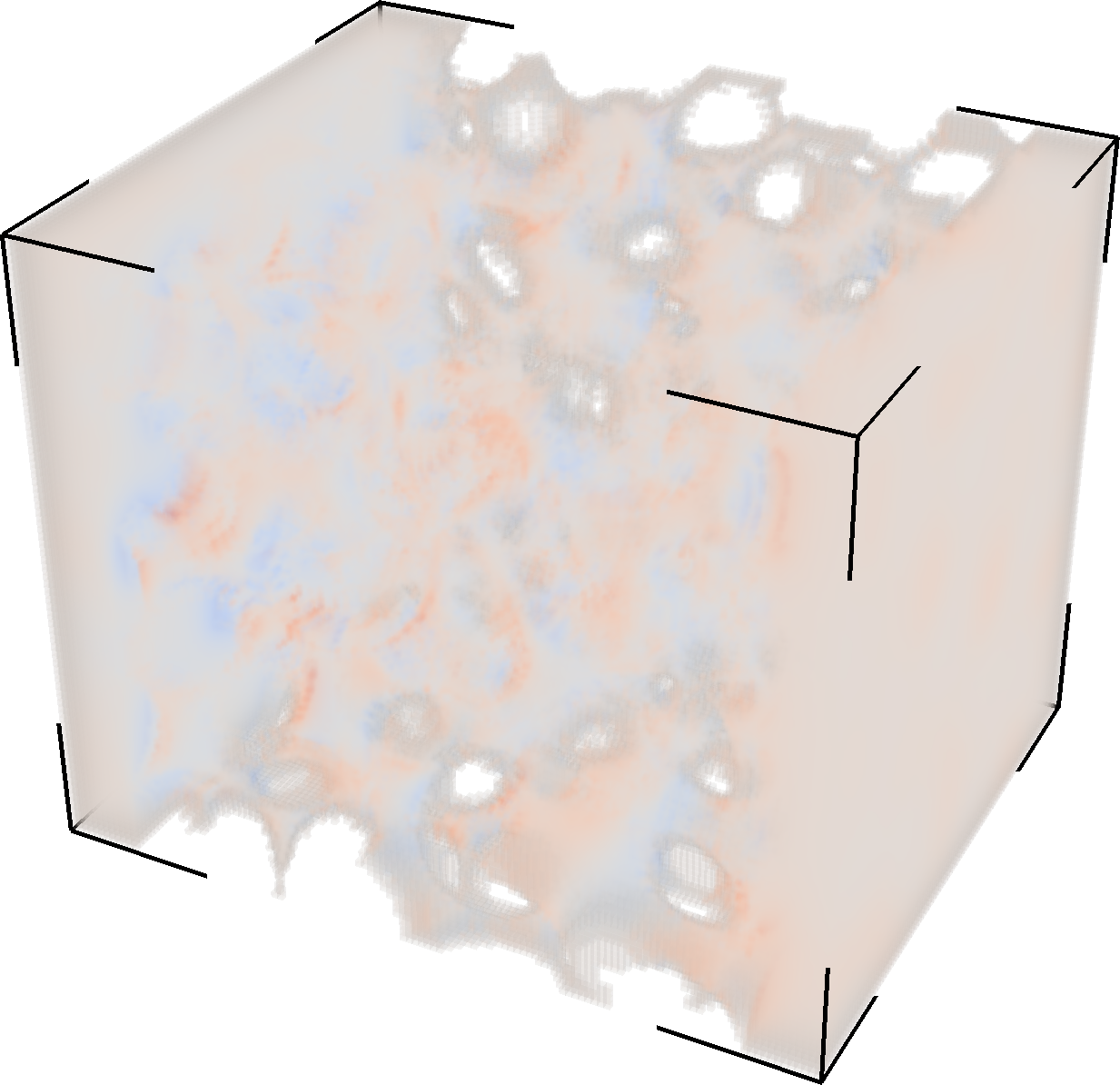}&
\includegraphics[width=0.1\textwidth]{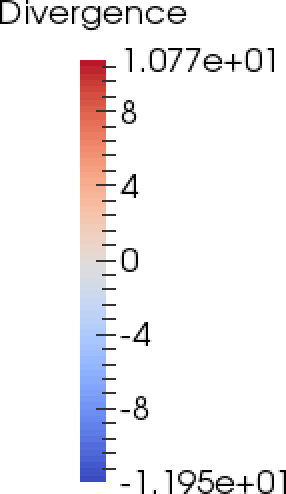}\\
(a) & (b) & \\
\includegraphics[width=0.23\textwidth]{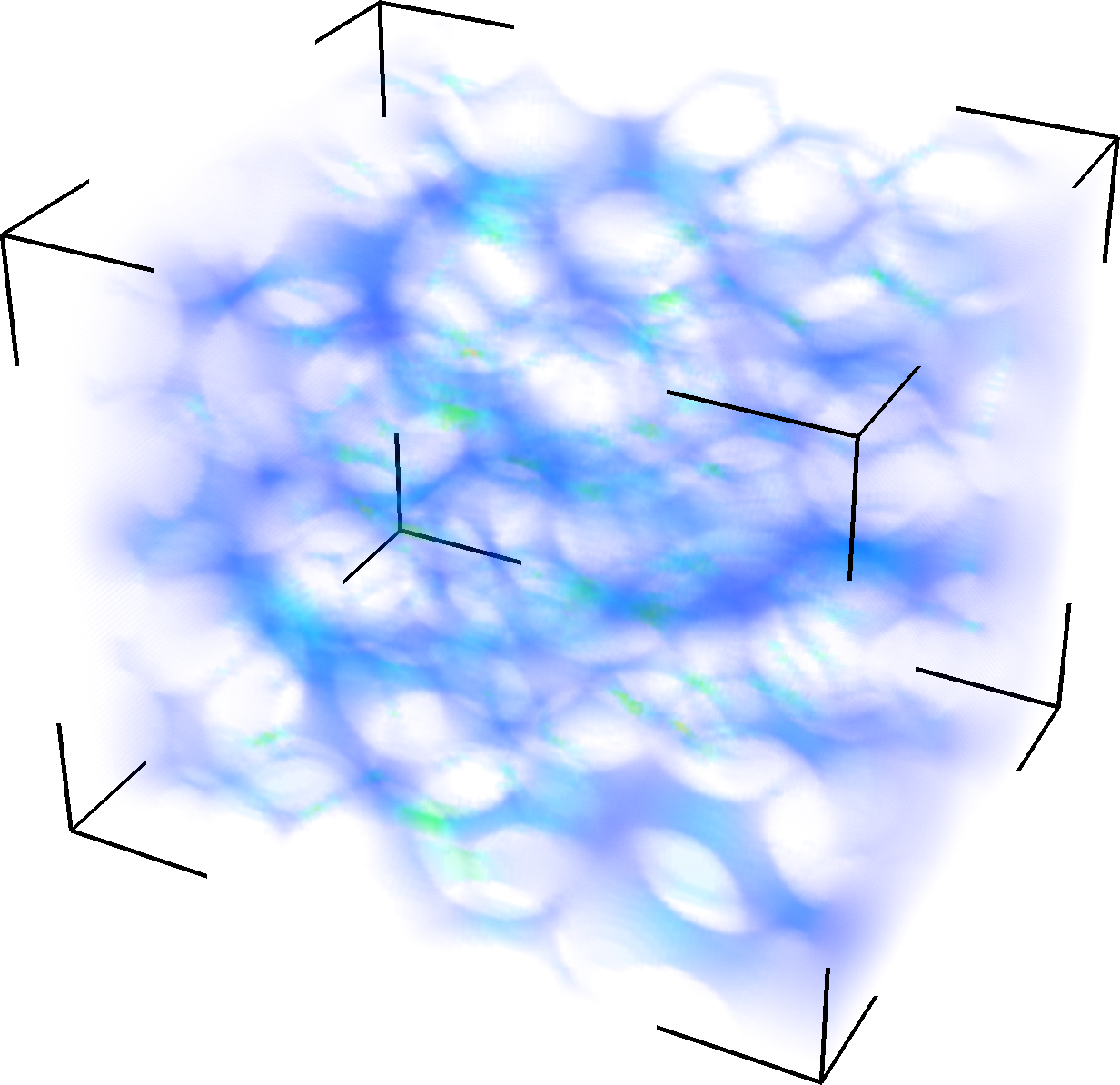}&
\includegraphics[width=0.231\textwidth]{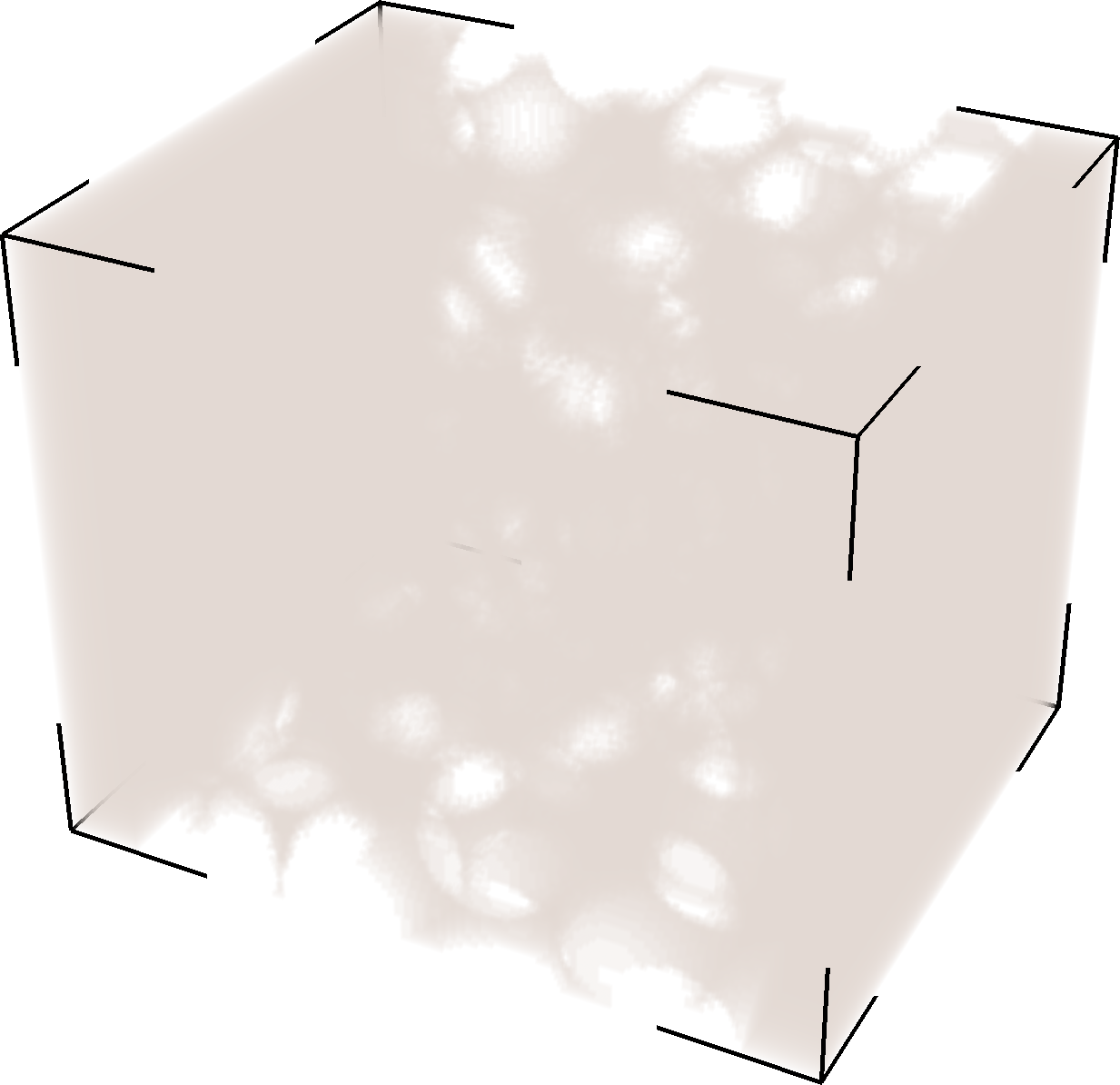}&
\includegraphics[width=0.085\textwidth]{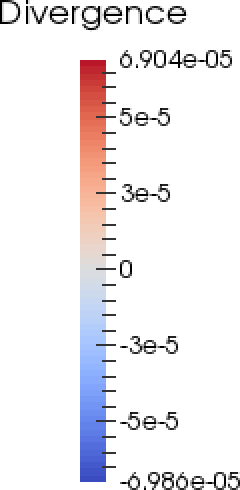}\\
(c) & (d) & 
\end{tabular}
\caption{Velocity field and divergence of velocity field in a sphere-pack. The magnitude of velocity and the divergence of
the velocity are shown in (a) and (b) for the simulations without div--div stabilization, and they are shown in (c) and (d) for the simulations with the div-div correction. The color bars are scaled to the minimum and maximum values of the divergence in the porous domain.}
\label{Fig:NS:div_correction}
\end{figure}

%%%%%%%%%%%%%%%%%%%%%%%%%%%%%%%%%%%%%%%%%%%%%%%%%%%%%%%%%%%%%%%%%%%%%%%%%%%%%%%%%%%%%%%%%%%%%%%%%%%%%%%%%%%%%%%%%%%%%%%
\section{Numerical Simulation}\label{sec:NS:numerical_simulation}
%%%%%%%%%%%%%%%%%%%%%%%%%%%%%%%%%%%%%%%%%%%%%%%%%%%%%%%%%%%%%%%%%%%%%%%%%%%%%%%%%%%%%%%%%%%%%%%%%%%%%%%%%%%%%%%%%%%%%%%
In this section, our numerical scheme is validated with several benchmark problems: Poiseuille flow in a cylinder and permeability estimation in a series of pipes with different shapes. Finally we compute the permeability in a sphere pack and a Berea rock sample.  We choose NIPG for space discretization in all of the numerical experiments. 
%Our calculation results are effective by comparing with the standard value produced by formulae from literature. We estimate permeability of a sphere pack and a 3D real rock sample, their structure from micro-CT scans. 

\subsection{Implementation and solver}
We implement the numerical algorithm on a cubic mesh, i.\,e., the domain $\Omega$ is partitioned into $\Nel$ identical regular hexahedrons. One computational advantage of selecting cubic meshes is that we do not need to store the topology of the grid and its related data, such as the indices of adjacent elements or geometrical properties of every element. Instead, these data are computed on the fly and thus no additional memory is consumed. Another factor that led to choose cubic meshes is that each element represents a voxel in the reconstructed  micro-CT images of the porous structure.
\par
Let the cube $\hat{E}=[-1,\,1]^3$ be the reference element. We employ a hierarchical modal $L^2$ orthonormal basis $\{\hat{\psi}_{j}\}$ on $\hat{E}$, which is constructed by tensor products of one dimensional Legendre polynomials. Then all the basis on element $E_k$, $0 \leq k \leq \Nel-1$, are generated by applying invertible linear transformation $\vec{F}_k$, i.\,e., the basis functions: $\psi_{kj}=\hat{\psi}_{j}\circ\vec{F}_k^{-1}$. 
There are four degrees of freedom per element.
%Here, the number of local degrees of freedom, $\Nloc$, is the same on each element $E_k$, and for the space $\IP_q(E_k)$ we have
%\begin{equation*}
%\Nloc ~=~ \frac{(q+1)(q+2)(q+3)}{6}~.
%\end{equation*}
For the approximation of surface and volume integrals we use Gauss--Legendre quadrature rules and derive rules for two and three space dimensions by tensor products. More details related with the domain triangulation and basis functions are referred to \cite{FLAR2018AdvCH}.
\par
The numerical solutions $p_h^n, \phi_h^n \in \IP_1(\setE_h)$ and $\vec{v}_h^n, \vec{u}_h^n \in \IP_1(\setE_h)^3$ can be written as linear combinations of the basis functions:
\begin{align*}
p_h^n(\vec{x}) ~&=~ \sum_{k=0}^{\Nel-1} \sum_{j=0}^{3} p_{kj}^n\,\psi _{kj}(\vec{x})~, &
\vec{v}_h^n(\vec{x}) ~&=~ \sum_{\ell=0}^{2} \sum_{k=0}^{\Nel-1} \sum_{j=0}^{3} \vec{v}_{\ell kj}^n\,\vec{e}_\ell\psi _{kj}(\vec{x})~, \\
\phi_h^n(\vec{x}) ~&=~ \sum_{k=0}^{\Nel-1} \sum_{j=0}^{3} \phi_{kj}^n\,\psi _{kj}(\vec{x})~, &
\vec{u}_h^n(\vec{x}) ~&=~ \sum_{\ell=0}^{2} \sum_{k=0}^{\Nel-1} \sum_{j=0}^{3} \vec{u}_{\ell kj}^n\,\vec{e}_\ell\psi _{kj}(\vec{x})~.
\end{align*}
The unit vectors corresponding to the Euclidean basis are denoted by $\vec{e}_\ell$, for $0\leq \ell\leq 2$.
The entry of a matrix with row index $4  k' + j'$ (resp. $4 \Nel \ell' + 4  k' + j'$) and column index $4  k + j$ (resp. $4 \Nel \ell + 4  k + j$) is denoted by $[\,\cdot\,]_{k',j';k,j}$ (resp. $[\,\cdot\,]_{\ell',k',j';\ell,k,j}$).  Due to the orthogonality of basis functions, the mass matrix is the identity matrix multiplied by a scalar. We define the following matrices and vectors:
\begin{align*}
\vecc{M} ~&=~ \frac{h^3}{8}\vecc{I}~,&
[\vec{B}_\strain^n]_{\ell',k',j'} ~&=~ b_{\strain}(\vec{e}_{\ell'}\psi_{k'j'})~,\\
[\vecc{A}_\strain]_{\ell',k',j';\ell,k,j} ~&=~ a_\strain(\vec{e}_\ell\psi_{kj},\vec{e}_{\ell'}\psi_{k'j'})~,&
[\vec{B}_\mathcal{C}^n]_{\ell',k',j'} ~&=~ b_\mathcal{C}(\vec{e}_{\ell'}\psi_{k'j'})~,\\
[\vecc{A}_\mathcal{C}^n]_{\ell',k',j';\ell,k,j} ~&=~ a_\mathcal{C}(\vec{v}_h^{n-1},\vec{e}_\ell\psi_{kj},\vec{e}_{\ell'}\psi_{k'j'})~,&
[\vec{B}_\mathcal{P}^n]_{\ell',k',j'} ~&=~ b_\mathcal{P}(p_h^n,\vec{e}_{\ell'}\psi_{k'j'})~,\\
[\vecc{A}_\varphi]_{k',j';k,j} ~&=~ a_\varphi(\psi_{kj},\psi_{k'j'})~,&
[\vec{B}_\phi^n]_{\ell',k',j'} ~&=~ -(\grad{\phi}_h^n,\vec{e}_{\ell'}\psi_{k'j'})~,\\
[\vecc{A}_\mathrm{div}]_{\ell'k'j';\ell kj} ~&=~ \big(\div{(\vec{e}_\ell\psi_{kj})},\div{(\vec{e}_{\ell'}\psi_{k'j'})}\big)~,&
[\vec{B}_\varphi^n]_{k',j'} ~&=~ -(\div{\vec{v}}_h^n,\psi_{k'j'})~.
\end{align*}
Then the matrix formulation of the velocity step reads: for any $1 \leq n \leq \Nel$, find $\vec{X}_\vec{v}^n$, where $[\vec{X}_\vec{v}^n]_{\ell,k,j} = \vec{v}_{\ell kj}^n$, such that:
\begin{align*}
(\vecc{M} + \tau\vecc{A}_\mathcal{C}^n + \frac{\tau}{\Rey}\vecc{A}_\strain)\vec{X}_\vec{v}^n ~=~ \vecc{M}\vec{X}_\vec{v}^{n-1} + \frac{\tau}{\mathrm{X}}(\vec{B}_\mathcal{P}^{n-1} + \vec{B}_\phi^{n-1}) + \tau\vec{B}_\mathcal{C}^n + \frac{\tau}{\Rey}\vec{B}_\strain^n~.
\end{align*}
For the pressure projection step, the matrix formulation of the elliptic problem reads: for any $1 \leq n \leq \Nel$, find $\vec{X}_\phi^n$, where $[\vec{X}_\phi^n]_{k,j} = \phi_{kj}^n$, such that
\begin{align*}
\vecc{A}_\varphi\vec{X}_\phi^n ~=~ \frac{\mathrm{X}}{\tau} \vec{B}_\varphi^n~.
\end{align*}
For the postprocessing step, the matrix formulation reads: for any $1 \leq n \leq \Nel$, find $\vec{X}_p^n$ and $\vec{X}_\vec{u}^n$, where $[\vec{X}_p^n]_{k,j} = p_{kj}^n$ and $[\vec{X}_\vec{u}^n]_{l,k,j} = \vec{u}_{lkj}^n$, such that
\begin{align*}
\vecc{M}\vec{X}_p^n ~&=~ \vecc{M}(\vec{X}_p^{n-1} + \vec{X}_\phi^n) + \frac{\mathrm{X}}{\Rey}\,\vec{B}_\varphi^n~,\\
(\vecc{M} + \sigma_\mathrm{div}\vecc{A}_\mathrm{div})\vec{X}_\vec{u}^n ~&=~ \vecc{M}\vec{X}_\vec{v}^n + \frac{\tau}{\mathrm{X}}\,\vec{B}_\phi^n~.
\end{align*}
In the numerical simulation, we use BiCGStab with multicolored Gauss--Seidel preconditioner to solve the linear systems above.

\subsection{Poiseuille flow in a cylinder}
The goal of this section is to validate the numerical scheme via the Poiseuille flow test, which is a simple case generally used to verify the correct implementation of the numerical algorithm. 
Let us consider the steady laminar flow of an incompressible viscous fluid through a cylindrical pipe, which length equals $L$, radius equals $R$, and cross-sectional centered at $(a,b)$: 
\begin{align}\label{eq:NS:Poiseuille_cylinder}
\Omega ~=~ \big\{(x_0,x_1,x_2)\in\IR^3:~(x_0-a)^2 + (x_1-b)^2 < R^2 ~~\text{and}~~ 0 < x_2 < L\big\}~.
\end{align}
The inflow boundary $\partial\Omega^\mathrm{in}$ is the horizontal cross-section of the cylinder in the plane $x_2 = 0$ and the outflow boundary is the horizontal cross-section in the plane $x_2 = L$.  In this set-up the analytical solution of the resulting Poiseuille flow is defined by:
%We employ the incompressible Navier--Stokes system \cref{eq:GoverningEquationsNS_ndim} to model above scenario. For the case that inflow boundary condition \cref{eq:NS:open_boundary_model_inflow_ndim} is set to be
%{\blue [Attention, it is not clear how $\partial\Omega$ is decomposed into $\partial\Omega^\mathrm{in}$, $\partial\Omega^\mathrm{wall}$, $\partial\Omega^\mathrm{out}$.]}
%\begin{align*}
%\vec{v}_\mathrm{D} ~=~ \transpose{\begin{bmatrix}0 & 0 & {\displaystyle 1-\Big(\frac{x_0-a}{R}\Big)^2-\Big(\frac{x_1-b}{R}\Big)^2}\end{bmatrix}}~, 
%\end{align*}
%the axisymmetric solution of \cref{eq:GoverningEquationsNS_ndim} on domain described by \cref{eq:NS:Poiseuille_cylinder} can be analytically expressed as
\begin{align*}
\vec{v}(x_0,x_1,x_2) ~&=~ \transpose{\begin{bmatrix}0 & 0 & {\displaystyle 1-\Big(\frac{x_0-a}{R}\Big)^2-\Big(\frac{x_1-b}{R}\Big)^2}\end{bmatrix}}~,\\
p(x_0,x_1,x_2) ~&=~ \frac{4\,\mathrm{X}}{\Rey\,R^2}\,(L - x_2)~.
\end{align*}
In numerical validation test, we select a cylindrical pipe parameterized by $L=1$, $R=0.5$, and $(a,b)=(0.5,0.5)$. The resolution of the mesh to approximate this pipe is $h=1/64$. We choose $\Rey = 1$ and $\mathrm{X}=1$. 
\Cref{fig:NS:Poiseuille_flow_v_p} shows the profiles of the velocity and pressure along the line defined by $x_1=x_2=0.5$ and the line defined by $x_0=x_1=0.5$ respectively.
The pointwise relative error between numerical and analytical solutions is less than $0.35\%$ for the velocity and less than
$0.57\%$ for the pressure.
\begin{figure}[ht!]\label{fig:NS:Poiseuille_velocity_and_pressure}
\centering
\includegraphics[width=0.885\textwidth]{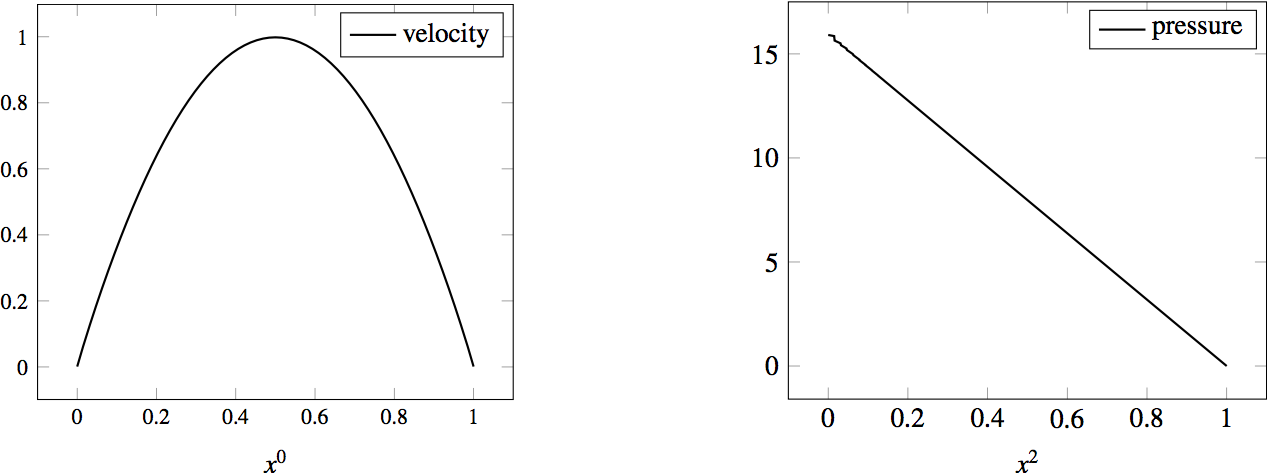}
\caption{Velocity magnitude along line $\{x_1=0.5,\,x_2=0.5\}$ (left) and pressure magnitude along line $\{x_0=0.5,\,x_1=0.5\}$ (right).}
\label{fig:NS:Poiseuille_flow_v_p}
\end{figure}

\subsection{Convergence study}
In this section, we verify our pressure-projection algorithm numerically by obtaining convergence rates with the method of manufactured solutions. We obtain temporal rates of convergence by computing the time-dependent nonlinear smooth solutions with decreasing time steps. The prescribed analytical solutions on a cubic domain $\Omega = (0,1)^3$ are as follows:
\begin{align*}
v_x(x,y,z,t) ~&=~ -\exp{(-t+x)}\sin{(y+z)}-\exp{(-t+z)}\cos{(x+y)}~,\\
v_y(x,y,z,t) ~&=~ -\exp{(-t+y)}\sin{(x+z)}-\exp{(-t+x)}\cos{(y+z)}~,\\
v_z(x,y,z,t) ~&=~ -\exp{(-t+z)}\sin{(x+y)}-\exp{(-t+y)}\cos{(x+z)}~,\\
p(x,y,z,t)   ~&=~ -\exp{(-2t)}\Big(\exp(x+z)\sin{(y+z)}\cos{(x+y)}\\
&\hspace{2cm}+\exp(x+y)\sin{(x+z)}\cos{(y+z)}\\
&\hspace{2cm}+\exp(y+z)\sin{(x+y)}\cos{(x+z)}\\
&\hspace{2cm}+\frac{1}{2}\exp{(2x)} + \frac{1}{2}\exp{(2y)} + \frac{1}{2}\exp{(2z)}
-7.63958172715414\Big)~.
\end{align*}
Note that the above solutions belong to a Beltrami flow family \cite{ethier1994exact},
with  parameters $\rho_0 = 1$ and $\mu_\mathrm{s} = 1$, and satisfy the incompressible Navier--Stokes system \cref{eq:GoverningEquationsNS} with average pressure zero condition \cref{eq:NS:close_system_mean_p} up to machine precision for any $t \in [0,T]$. 
\par
In this convergence study, the mesh resolution is set to be small enough, $h=1/100$, to guarantee the error in space is significantly smaller than the splitting error and we employ pure Dirichlet boundary condition on $\partial{\Omega}$. Taking the simulation end time $T=1.0$, the convergence rates in $L^2$ norm are reported in \Cref{tab:NS:convergence_rate_study}, which matches the expected first order convergence in time. We remark that the spatial convergence of the NIPG method with piecewise linears is second order with respect to the $L^2$ norm, which can be tested by prescribing a time-independent solution such as the Poiseuille flow in a cylindrical domain and run the simulator to steady state. 
\begin{table}[ht!]
\centering
\begin{tabularx}{\linewidth}{@{}C@{}|C@{}C@{}|C@{}C@{}|C@{}C@{}}
\toprule
$\tau$ & $\norm{\vec{v}_h^N-\vec{v}(T)}{L^2}$ & rate & $\norm{\vec{u}_h^N-\vec{v}(T)}{L^2}$ & rate & $\norm{p_h^N-p(T)}{L^2}$ & rate\\
\midrule
$1/2^3$ & $1.525$\,E$-2$ & ---     & $7.182$\,E$-3$ & ---     & $2.201$\,E$-1$ & ---    \\
$1/2^4$ & $5.726$\,E$-3$ & $1.413$ & $2.796$\,E$-3$ & $1.361$ & $1.094$\,E$-1$ & $1.009$\\
$1/2^5$ & $1.604$\,E$-3$ & $1.836$ & $7.900$\,E$-4$ & $1.824$ & $3.685$\,E$-2$ & $1.570$\\
$1/2^6$ & $4.521$\,E$-4$ & $1.827$ & $2.286$\,E$-4$ & $1.789$ & $1.265$\,E$-2$ & $1.543$\\
$1/2^7$ & $1.317$\,E$-4$ & $1.780$ & $7.507$\,E$-5$ & $1.607$ & $5.004$\,E$-3$ & $1.338$\\
$1/2^8$ & $4.635$\,E$-5$ & $1.507$ & $3.661$\,E$-5$ & $1.036$ & $2.399$\,E$-3$ & $1.061$\\
\bottomrule
\end{tabularx}
\caption{Errors and convergence rates of velocity and pressure.}
\label{tab:NS:convergence_rate_study}
\end{table}

\subsection{Permeability estimation in pipes}
Permeability is an important characteristic of porous media, that measures the ability of a porous medium to allow fluids to pass through. The most widely used unit for permeability is Darcy~$[\mathrm{D}]$ or milli-Darcy~$[\mathrm{mD}]$, where $1\,\mathrm{D}=10^3\,\mathrm{mD}$ and $1\,\mathrm{mD}=9.869\times10^{-16}\,\mathrm{m}^2$. High permeability of a material indicates fluid easily flows through. 
\par
The basic principle for estimating permeability is Darcy's law, which expresses the relation between permeability~$k$ of a porous medium block, cross section area $A$ and length $L$ as follows
\begin{equation*}
k ~=~ \frac{\mu_\mathrm{s}\,L\,Q}{A\, (P_\mathrm{in} - P_\mathrm{out})}~,
\end{equation*}
where $Q$ denotes the volumetric flux and $(P_\mathrm{in} - P_\mathrm{out})$ denotes the pressure difference between inlet and outlet. Thus, we compute the average flux and pressure from its numerical approximations by integration over faces:
\begin{align*}
Q ~&=~ \frac{1}{\abs{S_\mathrm{out}}} \int_{S_\mathrm{out}} \vec{u}_h^N\cdot\normal_\mathrm{main}~,\\
P_\mathrm{in} ~&=~ \frac{1}{\abs{S_\mathrm{in}}} \int_{S_\mathrm{in}} p_h^N, \quad
P_\mathrm{out} ~=~ \frac{1}{\abs{S_\mathrm{out}}} \int_{S_\mathrm{out}} p_h^N~.
\end{align*}
Here $S_\mathrm{in}$ and $S_\mathrm{out}$ denote sample faces near the inlet and outlet  respectively (at a distance of 10\% inside the domain), and $\normal_\mathrm{main}$ is the main flow direction, i.\,e., a unit vector pointing from $S_\mathrm{in}$ to $S_\mathrm{out}$. The reason that we take sample faces near the in/outlet is to minimize the influence of boundary conditions on in/outflow surfaces. Note in all of the following numerical experiments, we take the parameters $\rho_0 = 10^3\,\mathrm{kg}\,\mathrm{m}^{-3}$ and $\mu_\mathrm{s} = 10^{-3}\,\mathrm{kg}\,\mathrm{m}^{-1}\,\mathrm{s}^{-1}$, which approximate the state of water at $293.15\,\mathrm{K}$ ($20\,^{\circ}\text{C}$). 
\begin{figure}[ht!]
\centering
\includegraphics[width=\textwidth]{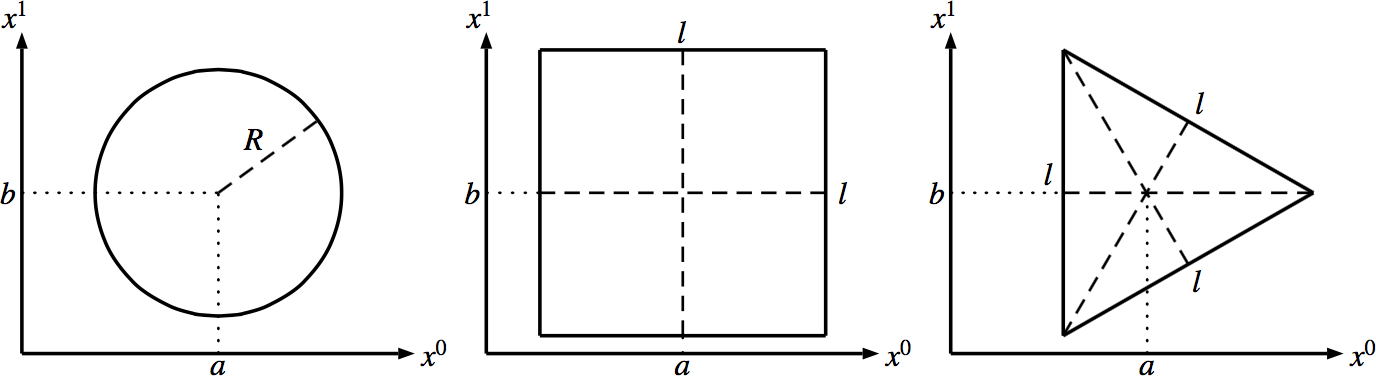}
\caption{Pipes have a~length of~$L$ in~$x_2$-direction and the following cross-sections: circle (left), square (center), and equilateral-triangle (right).}
\label{fig:pipe_geometry}
\end{figure}
%\paragraph{Pipes}
In this part, we estimate the permeability of pipes with different shapes, which include the circled, squared, and equilateral-triangle scenarios, see \Cref{fig:pipe_geometry}. The geometry of a cylindrical pipe is defined in \cref{eq:NS:Poiseuille_cylinder}. The geometry of a squared pipe, which length equals~$L$, edge length equals $\ell$, and cross-sectional centered at $(a,\,b)$, in Cartesian coordinates is defined by
\begin{align*}
\Omega ~=~ \big\{(x_0,x_1,x_2)\in\IR^3:~ \abs{x_0-a} < \frac{\ell}{2},~~ \abs{x_1-b} < \frac{\ell}{2}, ~~\text{and}~~ 0 < x_2 < L\big\}~.
\end{align*} 
The geometry of an equilateral-triangle pipe, which length equals~$L$, side length equals $\ell$, and cross-sectional centered at $(a,\,b)$, in Cartesian coordinate is defined by
\begin{align*}
\Omega ~=~ \big\{(x_0,x_1,x_2)\in\IR^3:~ x_0 - a +\frac{\sqrt{3}}{6}\ell > 0,~~ x_0 + \sqrt{3}x_1 - a - \sqrt{3}b - \frac{\sqrt{3}}{3}\ell &< 0,\\ x_0 - \sqrt{3}x_1 - a + \sqrt{3}b - \frac{\sqrt{3}}{3}\ell &< 0, ~~\text{and}~~ 0 < x_2 < L\big\}~.
\end{align*}
The analytical formulae for calculating the permeability of a block, cross section area $A$ with a pipe hole in the middle are \cite{saxena2017references}
\begin{align*}
\text{cylindrical pipe:} && k ~&=~ \frac{\pi R^4}{8A}~,\\
\text{squared pipe:} && k ~&=~ 0.035144\,\frac{\ell^4}{A}~,\\
\text{equilateral-triangle pipe:} && k ~&=~ \frac{\sqrt{3}\,\ell^4}{320\,A}~.
\end{align*}
It is worth noting that although the shape of the pipe is represented by a specific analytical expression, the permeability value of a particular pipe is independent of coordinates. We fix the block's cross section area $A = 1\,\mathrm{m}^2$ and use above formulae to generate a series of theoretical values as standard references, see the second column in \Cref{tab:NS:permeability_cylindrical_pipe}, \Cref{tab:NS:permeability_squared_pipe}, and \Cref{tab:NS:permeability_triangular_pipe}. A well defined inflow boundary condition plays an important role in computing accurate permeability \cite{lekner2007viscous}. We prescribe the following velocity distribution $\vec{v}^\mathrm{in} = \transpose{(0,\,0,\,v_z)}$ on inflow boundary, where $v_z$ is expressed as
\begin{align*}
\text{cylindrical pipe:} && v_z ~&=~ 1-\Big(\frac{x_0-a}{R}\Big)^2-\Big(\frac{x_1-b}{R}\Big)^2~,\\
\text{squared pipe:} && v_z ~&=~ \frac{16}{\ell^4}\,\Big(x_0 - a + \frac{\ell}{2}\Big)\Big(x_0 - a - \frac{\ell}{2}\Big)\Big(x_1 - b + \frac{\ell}{2}\Big)\Big(x_1 - b - \frac{\ell}{2}\Big)~,\\
\text{equilateral-triangle pipe:} && v_z ~&=~ \frac{6\sqrt{3}}{\ell^3}\,\Big(x_0 - a + \frac{\sqrt{3}\ell}{6}\Big)\Big(\big(x_0 - a)^2 - 3(x_1 - b - \frac{\ell}{3}\big)^2\Big)~.
\end{align*}
The velocity profiles above are shown in \Cref{fig:NS:pipes_inflow_velocity} and for the motivation of using these velocity distribution on inflow boundary we refer to \cite{lekner2007viscous}. Once the numerical flow reaches steady-state, we estimate the permeabilities for different sizes of pipes. Results are  shown in the third column of \Cref{tab:NS:permeability_cylindrical_pipe}, \Cref{tab:NS:permeability_squared_pipe}, and \Cref{tab:NS:permeability_triangular_pipe}. The last column of these tables provide the relative error.  We observe that our numerical scheme is accurate and that all relative errors are in the range  $[0.18\%, 5.25\%]$ for the different shape sizes.
\begin{figure}[ht!]
\centering
\begin{tabularx}{\linewidth}{@{}C@{}C@{}C@{}}
\includegraphics[width=0.295\textwidth]{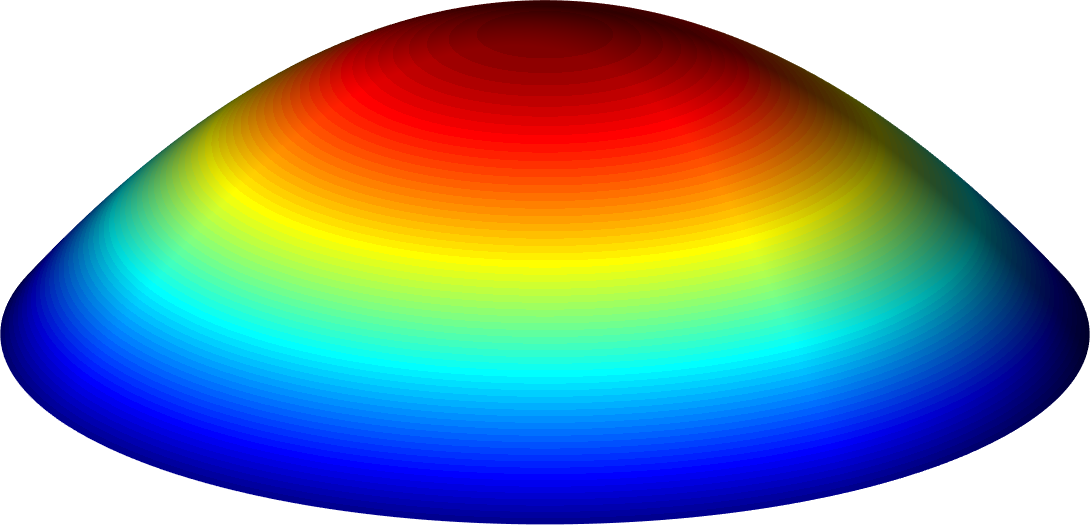}&
\includegraphics[width=0.32\textwidth]{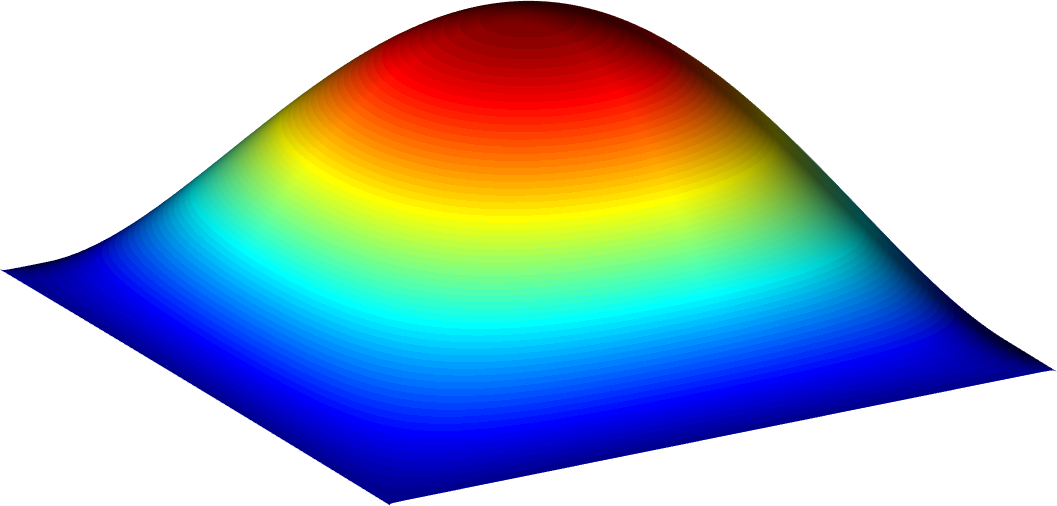}&
\includegraphics[width=0.30\textwidth]{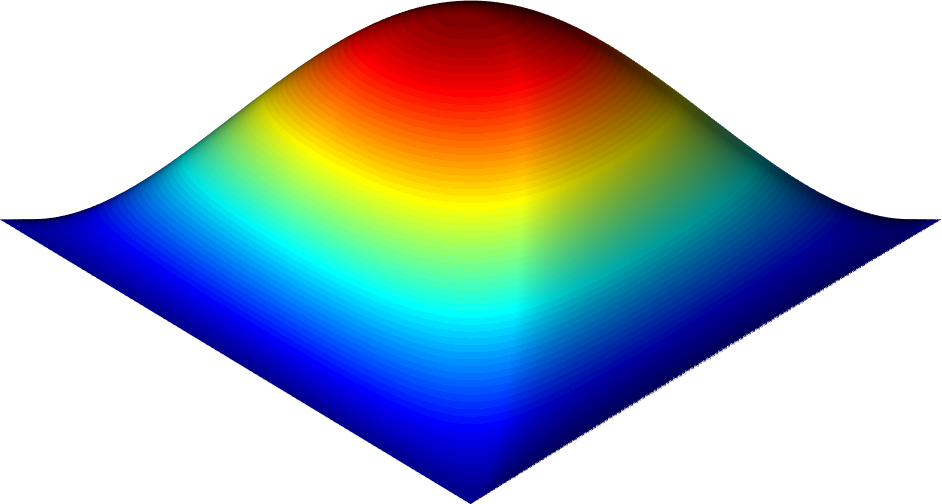}
\end{tabularx}
\caption{Velocity distributions on inflow boundary: cylinder (left), square (center), and equilateral-triangle (right).}
\label{fig:NS:pipes_inflow_velocity}
\end{figure}
\begin{table}[ht!]
\centering
\begin{tabularx}{\linewidth}{>{\centering}m{2.3cm}|C@{}C@{}|c@{}}
\toprule
radius $[\mathrm{m}]$ & theoretical value $[\mathrm{mD}]$ & estimated value $[\mathrm{mD}]$ & relative error\\
\midrule
$2.25$\,E$-4$ & $1.020$\,E$+0$ & $9.853$\,E$-1$ & $3.40\%$ \\
$4.00$\,E$-4$ & $1.019$\,E$+1$ & $1.014$\,E$+1$ & $0.49\%$ \\
$7.00$\,E$-4$ & $9.554$\,E$+1$ & $9.670$\,E$+1$ & $1.21\%$ \\
$1.25$\,E$-3$ & $9.714$\,E$+2$ & $9.535$\,E$+2$ & $1.84\%$ \\
$2.25$\,E$-3$ & $1.020$\,E$+4$ & $9.822$\,E$+3$ & $3.71\%$ \\
$4.00$\,E$-3$ & $1.019$\,E$+5$ & $1.014$\,E$+5$ & $0.49\%$ \\
%\midrule
%$2$ & $2.25$\,E$-4$ & $1.020$\,E$+0$~($\approx 1.0$\,E$+0$) & $9.856$\,E$-1$ & $3.37\%$ \\
%$~$ & $4.00$\,E$-4$ & $1.019$\,E$+1$~($\approx 1.0$\,E$+1$) & $9.486$\,E$+0$ & $6.91\%$ \\
%$~$ & $7.00$\,E$-4$ & $9.554$\,E$+1$~($\approx 1.0$\,E$+2$) & $9.363$\,E$+1$ & $2.00\%$ \\
%$~$ & $1.25$\,E$-3$ & $9.714$\,E$+2$~($\approx 1.0$\,E$+3$) & $9.300$\,E$+2$ & $4.26\%$ \\
%$~$ & $2.25$\,E$-3$ & $1.020$\,E$+4$~($\approx 1.0$\,E$+4$) & $9.856$\,E$+3$ & $3.37\%$ \\
%$~$ & $4.00$\,E$-3$ & $1.019$\,E$+5$~($\approx 1.0$\,E$+5$) & $9.486$\,E$+4$ & $6.91\%$ \\
\bottomrule
\end{tabularx}
\caption{Permeability estimations on cylindrical pipes.}
\label{tab:NS:permeability_cylindrical_pipe}
\end{table}
\begin{table}[ht!]
\centering
\begin{tabularx}{\linewidth}{>{\centering}m{2.3cm}|C@{}C@{}|c@{}}
\toprule
edge length $[\mathrm{m}]$ & theoretical value $[\mathrm{mD}]$ & estimated value $[\mathrm{mD}]$ & relative error\\
\midrule
$4.09$\,E$-4$ & $9.965$\,E$-1$ & $1.036$\,E$+0$ & $3.96\%$ \\
$7.28$\,E$-4$ & $1.000$\,E$+1$ & $9.499$\,E$+0$ & $5.01\%$ \\
$1.30$\,E$-3$ & $1.017$\,E$+2$ & $9.967$\,E$+1$ & $2.00\%$ \\
$2.30$\,E$-3$ & $9.965$\,E$+2$ & $9.814$\,E$+2$ & $1.52\%$ \\
$4.09$\,E$-3$ & $9.965$\,E$+3$ & $1.034$\,E$+4$ & $3.76\%$ \\
$7.28$\,E$-3$ & $1.000$\,E$+5$ & $9.475$\,E$+4$ & $5.25\%$ \\
%\midrule
%$2$ & $4.09$\,E$-4$ & $9.965$\,E$-1$~($\approx 1.0$\,E$+0$) & $1.028$\,E$+0$ & $3.16\%$ \\
%$~$ & $7.28$\,E$-4$ & $1.000$\,E$+1$~($\approx 1.0$\,E$+1$) & $9.501$\,E$+0$ & $4.99\%$ \\
%$~$ & $1.30$\,E$-3$ & $1.017$\,E$+2$~($\approx 1.0$\,E$+2$) & $9.758$\,E$+1$ & $4.05\%$ \\
%$~$ & $2.30$\,E$-3$ & $9.965$\,E$+2$~($\approx 1.0$\,E$+3$) & $9.731$\,E$+2$ & $2.35\%$ \\
%$~$ & $4.09$\,E$-3$ & $9.965$\,E$+3$~($\approx 1.0$\,E$+4$) & $1.029$\,E$+4$ & $3.26\%$ \\
%$~$ & $7.28$\,E$-3$ & $1.000$\,E$+5$~($\approx 1.0$\,E$+5$) & $9.504$\,E$+4$ & $4.96\%$ \\
\bottomrule
\end{tabularx}
\caption{Permeability estimations on squared pipes.}
\label{tab:NS:permeability_squared_pipe}
\end{table}
\begin{table}[ht!]
\centering
\begin{tabularx}{\linewidth}{>{\centering}m{2.3cm}|C@{}C@{}|c@{}}
\toprule
side length $[\mathrm{m}]$ & theoretical value $[\mathrm{mD}]$ & estimated value $[\mathrm{mD}]$ & relative error\\
\midrule
$6.53$\,E$-4$ & $9.972$\,E$-1$ & $9.954$\,E$-1$ & $0.18\%$ \\
$1.17$\,E$-3$ & $1.028$\,E$+1$ & $1.067$\,E$+1$ & $3.79\%$ \\
$2.07$\,E$-3$ & $1.007$\,E$+2$ & $1.015$\,E$+2$ & $0.79\%$ \\
$3.67$\,E$-3$ & $9.949$\,E$+2$ & $1.004$\,E$+3$ & $0.91\%$ \\
$6.53$\,E$-3$ & $9.972$\,E$+3$ & $9.928$\,E$+3$ & $0.44\%$ \\
$1.17$\,E$-2$ & $1.028$\,E$+5$ & $1.066$\,E$+5$ & $3.70\%$ \\
%\midrule
%$2$ & $6.53$\,E$-4$ & $9.972$\,E$-1$~($\approx 1.0$\,E$+0$) & $9.777$\,E$-1$ & $1.96\%$ \\
%$~$ & $1.17$\,E$-3$ & $1.028$\,E$+1$~($\approx 1.0$\,E$+1$) & $9.854$\,E$+0$ & $4.14\%$ \\
%$~$ & $2.07$\,E$-3$ & $1.007$\,E$+2$~($\approx 1.0$\,E$+2$) & $9.833$\,E$+1$ & $2.35\%$ \\
%$~$ & $3.67$\,E$-3$ & $9.949$\,E$+2$~($\approx 1.0$\,E$+3$) & $9.905$\,E$+2$ & $0.44\%$ \\
%$~$ & $6.53$\,E$-3$ & $9.972$\,E$+3$~($\approx 1.0$\,E$+4$) & $9.777$\,E$+3$ & $1.96\%$ \\
%$~$ & $1.17$\,E$-2$ & $1.028$\,E$+5$~($\approx 1.0$\,E$+5$) & $9.854$\,E$+4$ & $4.14\%$ \\
\bottomrule
\end{tabularx}
\caption{Permeability estimations on equilateral-triangle pipes.}
\label{tab:NS:permeability_triangular_pipe}
\end{table}

\subsection{Permeability estimation in sphere pack}\label{sec:NS:sphere_pack}
The randomly non-overlapping arrangement of spheres within a hexahedral domain can be viewed as a manufactured porous structure. Studying the flow aspects in artificial channels of a spherical packing inspires researchers to understand the complex behavior of fluid flow in realistic porous media. Computing permeability of a sphere pack image can be considered as a good starter for further permeability estimations of realistic rock samples.
\par
Analogously to the lab experiment set-up, we add buffer zones before and after the inlet and outlet faces of medium. The other sides are enclosed by solid walls. More precisely, after nondimensionalization, we embed the porous image into a square cross-section channel (see \Cref{fig:buffer}) as follows:
\begin{equation*}
\Omega ~=~ \big\{(x_0,x_1,x_2)\in\IR^3:~ 0 < x_0 < 1,~~ 0.1< x_1 < 0.9, ~~\text{and}~~ 0.1< x_2 < 0.9 \big\}~.
\end{equation*}
This design allows us to prescribe the boundary conditions in a convenient way. We set the initial velocity $\vec{v}^0=\vec{0}$ and employ the following velocity distribution $\vec{v}^\mathrm{in}=\transpose{(v_x,0,0)}$ on inflow boundary, where $v_x$ is expressed as
\begin{equation*}
v_x ~=~ \frac{125}{32}\,(x_1 - 0.1)(x_1 - 0.9)(x_2 - 0.1)(x_2 - 0.9)~.
\end{equation*}
Note, as above, the main flow direction $\normal_\mathrm{main}$ equals $\transpose{(1,0,0)}$, i.\,e., the $x_0$-direction. The average pressure and flux are computed on faces $S_\mathrm{in}$ and $S_\mathrm{out}$ that are located in the middle of the inlet and outlet buffer zones, respectively.
\begin{figure}[ht!]
\centering
\includegraphics[width=0.4\linewidth]{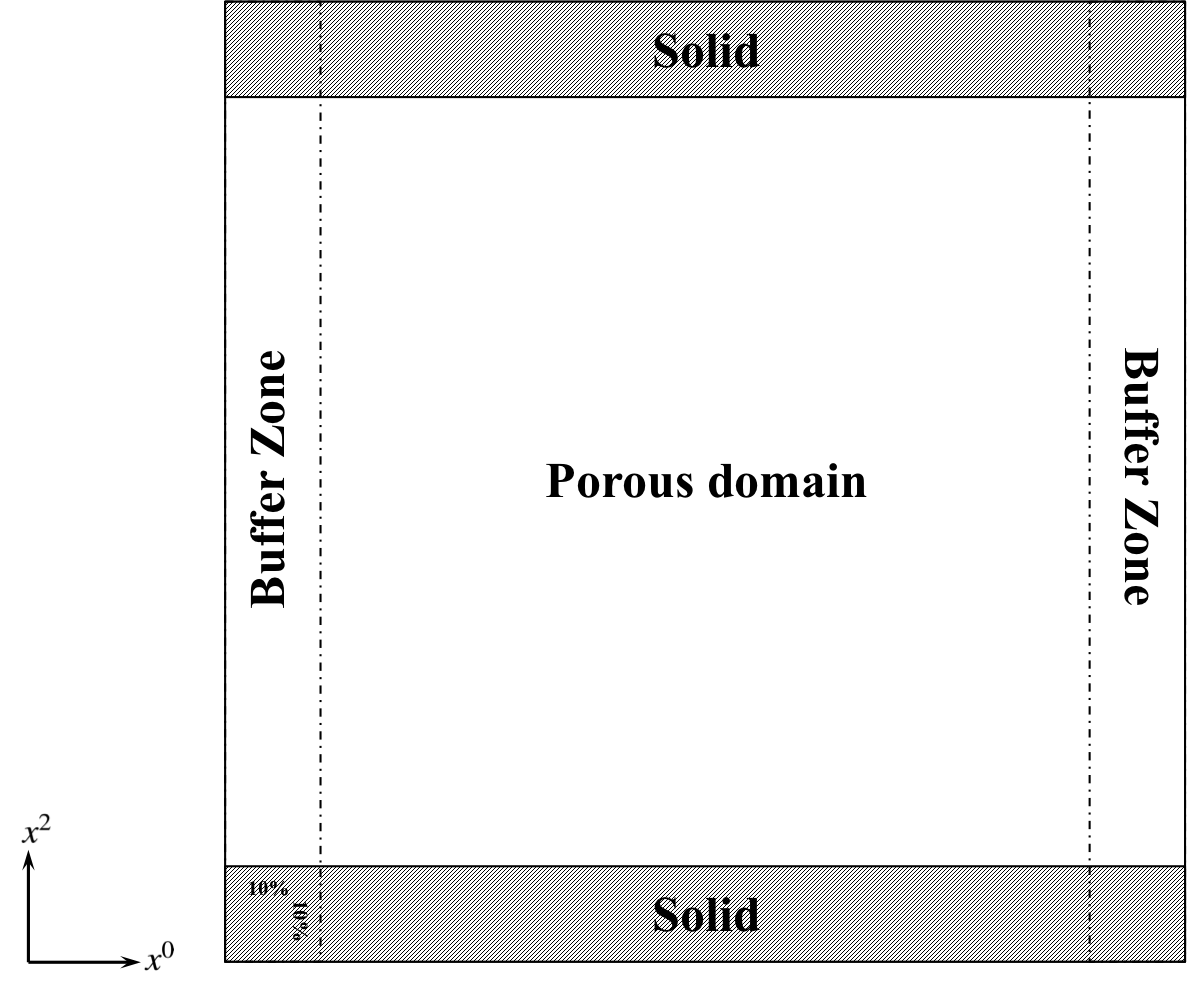}
\caption{A porous medium embedded in a square cross-section channel with buffer zones on both inlet and outlet (view in $x^1$-direction).}
\label{fig:buffer}
\end{figure}
\par
In this numerical experiment, we set the mesh resolution to $h = 1/124$. The buffers are $10\%$ on each side of the domain, i.\,e., the sample domain used for sphere pack is of size $100\times100\times100$ and faces $S_\mathrm{in}$ and $S_\mathrm{out}$ are located $5\%$ inside the domain from the inflow and outflow boundaries. We choose the characteristic length $x_\mathrm{c} = 10^{-3}\,\mathrm{m}$, the characteristic velocity $v_\mathrm{c} = 10^{-4}\,\mathrm{m}\,\mathrm{s}^{-1}$, and the characteristic pressure $p_\mathrm{c} = 1\,\mathrm{kg}\,\mathrm{m}^{-1}\,\mathrm{s}^{-2}$, which yields $\Rey = 10^{-1}$, $X = 10^{-5}$.
%($\Rey = 10^{-1}$ and $\Ca = 10^{-4}$) 
The magnitude of the velocity field and the pressure field at steady state are shown in \Cref{fig:NS:SpherepackNIPG_124_velocity_and_pressure}.  The computation of the permeability of this sample in the direction parallel to the flow yields the following results:
\begin{align*}
\text{permeability in~}x_0\text{-direction:} && k ~&=~ 5505\,\mathrm{mD}~,\\ 
\text{permeability in~}x_1\text{-direction:} && k ~&=~ 4027\,\mathrm{mD}~,\\
\text{permeability in~}x_2\text{-direction:} && k ~&=~ 4763\,\mathrm{mD}~. 
\end{align*}
\begin{figure}[ht!]
\centering
\includegraphics[width=\textwidth]{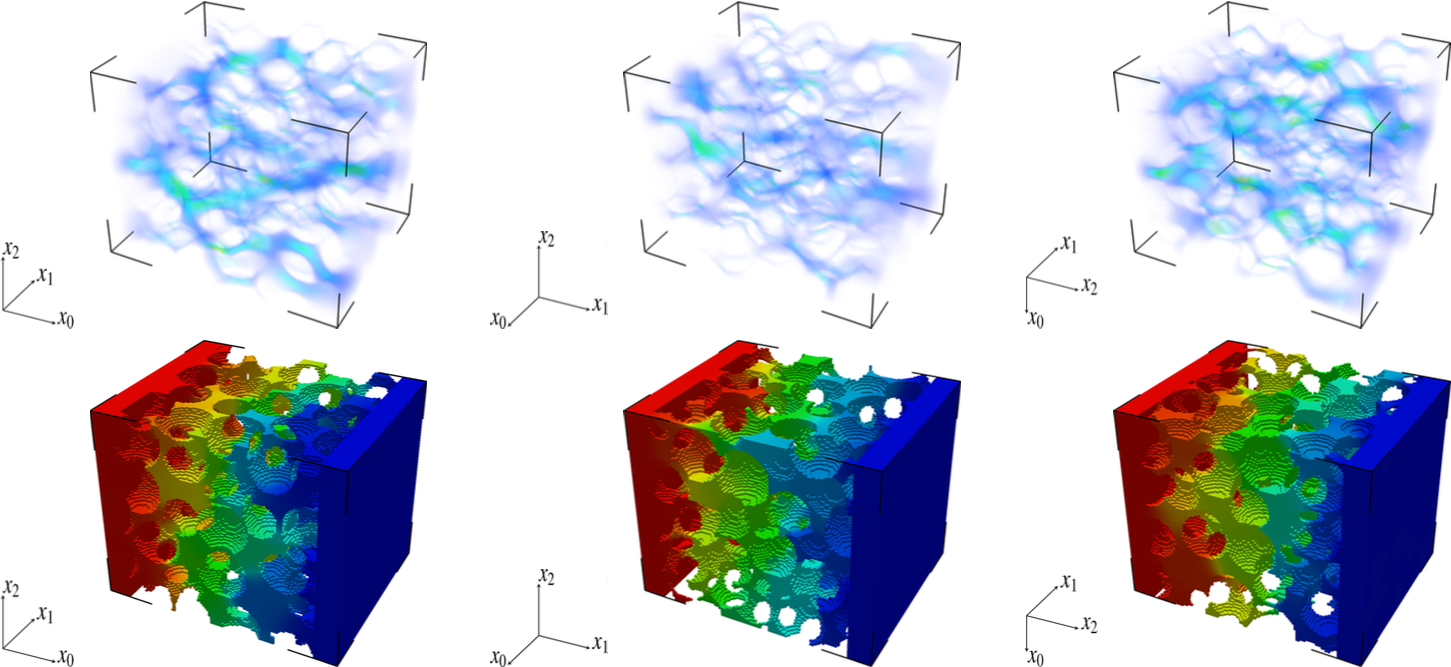}
\caption{Permeability estimation of sphere pack: magnitude of velocity field (top) and pressure field (bottom) at steady state. From left to right, the main flow directions are $x_0$-direction, $x_1$-direction, and $x_2$-direction, respectively.}
\label{fig:NS:SpherepackNIPG_124_velocity_and_pressure}
\end{figure}

\subsection{Permeability estimation in Berea sandstone}
Berea sandstone, also regarded as Berea Grit, as a type of sedimentary rock are widely distributed in nature. With suitable porosity and permeability, it has been recognized as one of the best samples in hydrogeology and it is frequently used as a source of laboratory material in petroleum industry for testing permeability. Depending on the sample, the order of permeability typically lies between $10^1\,\mathrm{mD}$ and $10^3\,\mathrm{mD}$ for a Berea sandstone core. %For the permeability range, see ref: https://www.bereasandstonecores.com/cores.php 
\par
The porous medium used here has a resolution of $256\times256\times256$ voxels, which was created by micro-CT imaging of a Berea sandstone.
The porosity of the rock sample is $0.2$. As in the previous example with sphere-pack, we add buffer zones to the inlet and outlet faces and enclose the other boundary faces with solid rock.  The computational domain is then embedded inside a squared pipe with mesh size $h = 1/320$ and each buffer zone occupying $10\%$ of the domain. The boundary conditions and the location of faces $S_\mathrm{in}$ and $S_\mathrm{out}$ for computing average pressure and flux are the same as in \Cref{sec:NS:sphere_pack}.
\par
We choose the characteristic length $x_\mathrm{c} = 10^{-3}\,\mathrm{m}$, the characteristic velocity $v_\mathrm{c} = 10^{-4}\,\mathrm{m}\,\mathrm{s}^{-1}$, and the characteristic pressure $p_\mathrm{c} = 1\,\mathrm{kg}\,\mathrm{m}^{-1}\,\mathrm{s}^{-2}$, which yields $\Rey = 10^{-1}$ and $X = 10^{-5}$.
%($\Rey = 10^{-1}$ and $\Ca = 10^{-4}$) 
At steady-state, the magnitude of the velocity field and the pressure field are shown in \Cref{fig:NS:BereaNIPG_320_velocity_and_pressure}. 
%To compute the permeability of the rock sample, we choose the faces $S_\mathrm{in}$ and $S_\mathrm{out}$ located at a distance of $5\%$ inside the domain from the inlet and outlet boundaries respectively. 
The numerical solutions yield the following results for the Berea sandstone sample:
\begin{align*}
\text{permeability in~}x_0\text{-direction:} && k ~&=~ 939.4\,\mathrm{mD}~,\\ 
\text{permeability in~}x_1\text{-direction:} && k ~&=~ 1015.9\,\mathrm{mD}~,\\
\text{permeability in~}x_2\text{-direction:} && k ~&=~ 742.5\,\mathrm{mD}~. 
\end{align*}
\begin{figure}[ht!]
\centering
\includegraphics[width=\textwidth]{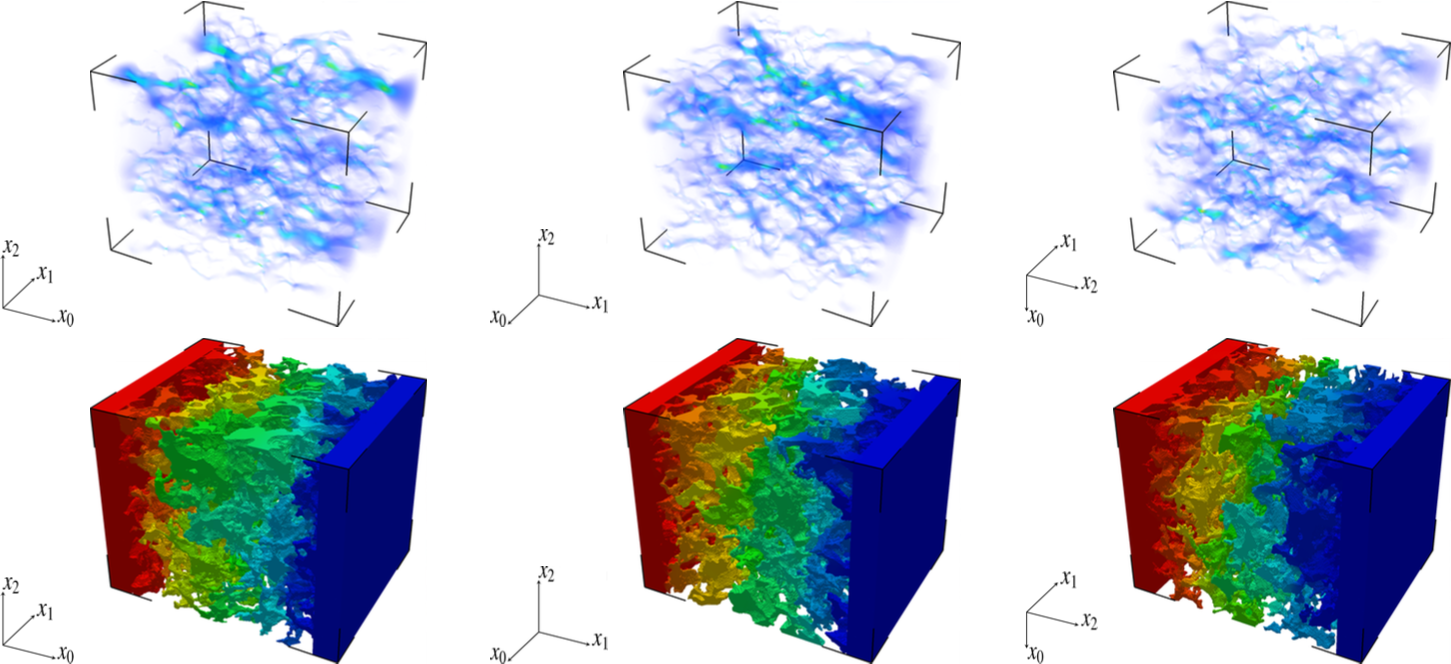}
\caption{Permeability estimation of Berea sandstone: magnitude of velocity field (top) and pressure field (bottom) at steady state. From left to right, the main flow directions are $x_0$-direction, $x_1$-direction, and $x_2$-direction, respectively.}
\label{fig:NS:BereaNIPG_320_velocity_and_pressure}
\end{figure}

%%%%%%%%%%%%%%%%%%%%%%%%%%%%%%%%%%%%%%%%%%%%%%%%%%%%%%%%%%%%%%%%%%%%%%%%%%%%%%%%%%%%%%%%%%%%%%%%%%%%%%%%%%%%%%%%%%%%%%%
\section{Conclusion}
%%%%%%%%%%%%%%%%%%%%%%%%%%%%%%%%%%%%%%%%%%%%%%%%%%%%%%%%%%%%%%%%%%%%%%%%%%%%%%%%%%%%%%%%%%%%%%%%%%%%%%%%%%%%%%%%%%%%%%%
In this paper, we propose a numerical method for solving the incompressible Navier--Stokes equations on micro-CT images of porous media. The method combines a pressure-correction algorithm with interior penalty discontinuous Galerkin. Results show the accuracy of the method for simple porous media with known analytical permeability values. Simulations on sphere-pack and Berea sandstone show the robustness of the method. 

%%%%%%%%%%%%%%%%%%%%%%%%%%%%%%%%%%%%%%%%%%%%%%%%%%%%%%%%%%%%%%%%%%%%%%%%%%%%%%%%%%%%%%%%%%%%%%%%%%%%%%%%%%%%%%%%%%%%%%%
\section*{Acknowledgments} 
The authors thank Christopher Thiele (Rice University) and Prof. Walter Chapman (Rice University) for discussions. The authors acknowledge Shell International Exploration and Production Inc. for computational resources supporting.

%%%%%%%%%%%%%%%%%%%%%%%%%%%%%%%%%%%%%%%%%%%%%%%%%%%%%%%%%%%%%%%%%%%%%%%%%%%%%%%%%%%%%%%%%%%%%%%%%%%%%%%%%%%%%%%%%%%%%%%
%% Appendix
%%%%%%%%%%%%%%%%%%%%%%%%%%%%%%%%%%%%%%%%%%%%%%%%%%%%%%%%%%%%%%%%%%%%%%%%%%%%%%%%%%%%%%%%%%%%%%%%%%%%%%%%%%%%%%%%%%%%%%%
\appendix
\section{Diffusion Term}\label{sec:appendix}
In model problem \cref{eq:GoverningEquationsNS_ndim}, if the diffusion term $-\laplace{\vec{v}}$ is replaced by a generalized form $-2\div{\strain{(\vec{v})}}$, then for open boundary problem the DG forms \cref{eq:NS:DG_strain_a,eq:NS:DG_strain_b} should be modified as follows:
\begin{align*}
\begin{split}
a_\strain(\vec{v}, \vec{\theta}) ~=~&
2\sum_{E\in\mathcal{E}_h} \int_E \strain(\vec{v}):\strain(\vec{\theta}) - 2\sum_{e\in\Gammah\cup\partial\Omega^\mathrm{D}} \int_e \avg{\strain(\vec{v})\cdot\vec{n}_e} \cdot \jump{\vec{\theta}}\\
&+2\epsilon \sum_{e\in\Gammah\cup\partial\Omega^\mathrm{D}} \int_e \avg{\strain(\vec{\theta})\cdot\vec{n}_e} \cdot \jump{\vec{v}} + \frac{\sigma}{h} \sum_{e\in\Gammah\partial\Omega^\mathrm{D}}\int_e \jump{\vec{v}}\cdot\jump{\vec{w}}~,
\end{split}\\
b_\strain(\vec{\theta}) ~=~&
2\epsilon \sum_{e\in\partial\Omega^\mathrm{D}} \int_e \big(\strain(\vec{\theta})\cdot\vec{n}_e\big)\cdot \vec{v}_\mathrm{D} + \frac{\sigma}{h} \sum_{e\in\partial\Omega^\mathrm{D}}\int_e \vec{v}_\mathrm{D}\cdot\vec{\theta}~.
\end{align*}
In addition, the pressure update postprocessing step \cref{eq:NS:time_dis_post_step_p} should be changed to 
\begin{align*}
p^n ~&=~ p^{n-1} + \phi^n - \frac{\mathrm{X}}{\Rey}\,\div{\vec{v}^n}~.
\end{align*}
The corresponding \cref{eq:NS:space_dis_post_step_p}, in space discretization part, should also be changed to 
\begin{align*}
(p^n_h,\chi_h) ~&=~ (p^{n-1}_h,\chi_h) + (\phi^n_h,\chi_h) - \frac{\mathrm{X}}{\Rey}\,(\div{\vec{v}^n_h},\chi_h)~.
\end{align*}
Note, the above equations comprise all of the changes that are necessary when modifying the numerical scheme.

%% If you have bibdatabase file and want bibtex to generate the
%% bibitems, please use
%%
%%  \bibliographystyle{elsarticle-num} 
%%  \bibliography{<your bibdatabase>}
\section*{References}
\bibliography{bibliography} % Load bibliography.bib.
\bibliographystyle{elsarticle-num}

\end{document}